\newtheorem{theorem}{Theorem}[section]
\newtheorem{lemma}[theorem]{Lemma}
\newtheorem{example}{Example}
\newtheorem*{remark}{Remark}
\DeclareMathOperator{\E}{\mathbb{E}}
\DeclareMathOperator{\Cov}{Cov}
\DeclareMathOperator{\corr}{Corr}
\DeclareMathOperator{\vecc}{\text{vecc}}
\newcommand\norm[1]{\left\lVert#1\right\rVert}
\newcommand{\matr}[1]{\mathbf{#1}}
\renewcommand{\vec}[1]{\boldsymbol{#1}}
\DeclareMathOperator{\md2}{\mbox{MD}^2}
\DeclareMathOperator{\Tr}{Tr}
\DeclareMathOperator{\med}{median}
\DeclareMathOperator{\diag}{diag}
\DeclareMathOperator{\clr}{clr}
\DeclareMathOperator{\ilr}{ilr}
\title{Edgewise outliers of network indexed signals
}
\author{Christopher Rieser \\
  Institute of Statistics and Mathematical Methods in Economics\\
TU Wien \\
  Wiedner Hauptstraße, 1040 Vienna, Austria \\
  \\
  \texttt{christopher.rieser@tuwien.ac.at} \\
  \And
  Anne Ruiz-Gazen\\
  TSE Research Faculty \\
  Toulouse School of Economics \\
  1, Esplanade de l'Université
31080 Toulouse, France \\
  \texttt{anne.ruiz-gazen@tse-fr.eu} \\
  \And 
  Christine Thomas-Agnan \\
    TSE Research Faculty\\
  Toulouse School of Economics \\
  1, Esplanade de l'Université
  31080 Toulouse, France \\
  \texttt{christine.thomas@tse-fr.eu} \\
}
\begin{document}
\maketitle

\begin{abstract}
    We consider models for network indexed multivariate data involving a dependence between variables as well as across graph nodes. 
    In the framework of these models, we focus on  outliers detection and introduce the concept of edgewise outliers. For this purpose, we first derive the distribution of some sums of squares, in particular squared Mahalanobis distances that can be used to fix detection rules and thresholds for outlier detection. We then propose a robust version of the deterministic MCD algorithm that we call edgewise MCD.  An application on simulated data shows the interest of taking the dependence structure into account. We also illustrate the utility of the proposed method with a real data set. 
\end{abstract}

\section{Introduction}

Many real-world multivariate data sets $\vec{x}_i \in \mathbb{R}^p$, $i = 1,\cdots, n$, contain unusual observations that can heavily distort the outcome of a statistical analysis. In particular, it is true for data indexed by a network, which is the main focus of this paper.
The detection of outliers and the development of robust methods are the primary goals of robust statistics. 
Multivariate location and scatter are often at the core of methods used in multivariate analysis and need to be estimated robustly. Arguably, the most useful tool for the detection of multivariate outliers is the squared Mahalanobis distance \cite{maha}. Typically a data point $\vec{x}_i$ is said to be an outlier if its squared Mahalanobis distance $ (\vec{x}_i-\vec{\mu})' \matr{\Sigma}^{-1}(\vec{x}_i-\vec{\mu}) $, where $\vec{\mu}$ and $\vec{\Sigma}$ denote the center and the covariance, exceeds a given threshold. It is well known that $\vec{\mu}$ as well as $\vec{\Sigma}$ need to be estimated robustly \cite{rousseeuw1990unmasking}. Among the most famous estimators is the MCD estimator and its extensions \cite{rousseeuw1985multivariate, rousseeuw1999fast, hubert2010minimum, hubert2018minimum}.

Commonly, observations $\vec{x}_i$ are assumed to be independent of each other. However, additional information often leads to the assumption that there is a dependence between the samples. An obvious case of such a dependence structure is spatially dependent data. Herein, it is often assumed that data points that are close spatially behave similarly, as stated by Tobler's first law \cite{Tobler1970}. Mainly, this comes in the form of an assumption on the pair of points at different locations ($\vec{x}_i,\vec{x}_j)$, $m \neq n$, as is the case in spatial statistics, see \cite{cressie2015statistics,ChilesetAl2012,BaileyEtAl2012}. Recently, methods for the detection of multivariate outliers with spatial dependence have been extended \cite{chen2008detecting, harris2014multivariate, ernst2017comparison}. Of particular interest for this paper is the approach presented in \cite{filzmoser2014identification}, in which outlier detection is based on the squared Mahalanobis distance of pairwise differences $ (\vec{x}_i-\vec{x}_j)' \matr{\Sigma}^{-1}(\vec{x}_i-\vec{x}_j)$. 

The main objective of this paper is to propose a very general framework, as well as robust methods, for the detection of outlying pairs of neighbor points  $(\vec{x}_i,\vec{x}_j)$. An advantage of the proposed method is that the closeness of two points $(\vec{x}_i,\vec{x}_j)$ can be decided beforehand. This need not be  of physical nature. Of course, a typical example would be spatially dependent data where spatially close data should be similar, but the approach presented in this paper also allows for a wider variety of dependence, e.g., personal data in a social network. In addition, we will also allow for external variables to be included, i.e., covariates $z_{i1},\cdots,z_{iq}$ that have an influence on $\vec{x}_n$. 

This paper is structured as follows. In the second section, we introduce the statistical model that we impose on the data matrix $\matr{X}$ inspired by graph signal processing and discuss the general properties of the latter. In the third section, we introduce the notion of, what we dubbed, edgewise outliers. We discuss the  detection of this type of outliers, using robust estimators of location and scale.
The fourth section contains a simulation study and shows the utility of our method in a controlled setting. In the fifth section, we analyze electoral data for the departments of France and comment on the results.

From hereon, matrices are written in bold capital letters, such as $\matr{A}$, and respectively vectors in small bold letters such as $\vec{b}$. Entries will  be written as $a_{ij},$ respectively $b_{i}$. For any matrix $\matr{A} \in  \mathbb{R}^{q \times s}$ we define the column vectorization operator as the operator stacking the columns of the latter into a vector, i.e. $\vecc{(\matr{A})} = (a_{11},\dots,a_{q1},a_{12},\dots,a_{q2},\dots,a_{1s},\dots,a_{qs})'$, where the prime denotes transposition. The inverse of $\vecc$ will be noted as $\vecc^{-1}$.

\section{Probabilistic framework}
In order to introduce the probabilistic model that we consider for the samples $\vec{x}_i$, we recall the definition of the matrix normal distribution and the basics of graph signal theory.

\subsection{Matrix normal distribution} \label{sec:matnorm}
\label{matrnormal}

Let $\matr{X}$ be a real-valued random variable in matrix form of dimension $n \times p$. As in \cite{petersen2008matrix}, we say that $\matr{X}$ follows a matrix  normal distribution if 
\begin{equation} \label{def:matrixNormal}
 \text{vecc}(\matr{X})  \sim \mathcal{N}_{np} (\vecc{(\pmb{\mu})}, \matr{\Sigma}_V \otimes \matr{\Sigma}_G),
\end{equation}
where $\otimes$ denotes the Kronecker product,
$\pmb{\mu} \in \mathbb{R}^{n \times p}$, and $\matr{\Sigma}_V \in \mathbb{R}^{p \times p}$ as well as $\matr{\Sigma}_G \in \mathbb{R}^{n \times n}$ are two positive semi-definite matrices. Alternatively to (\ref{def:matrixNormal}) we also write $\matr{X} \sim \mathcal{N}_{np}(\pmb{\mu},\matr{\Sigma}_G,  \matr{\Sigma}_V)$.
If $\matr{X}$ follows a matrix normal distribution then, thanks to the Kronecker product, the covariance of two entries of $\matr{X}$ can be written as a product of the entries of $\matr{\Sigma}_G$ and $\matr{\Sigma}_V$ as 
\begin{equation} \label{eq:covProd}
    \Cov(X_{ik},X_{jl})=(\matr{\Sigma}_G)_{ij} (\matr{\Sigma}_V)_{kl}.
\end{equation}
Linear transformations of matrix normal distributed variables act accordingly as the following theorem shows. 

\begin{theorem} \label{thm:mattrans}
 If $ \text{vecc}( \matr{X})  \sim \mathcal{N}_{np} (\text{vecc} (\pmb{\mu}), \matr{\Sigma}_V  \otimes \matr{\Sigma}_G)$ then for any two matrices $\pmb{A} \in \mathbb{R}^{m \times n}$ and $\matr{B} \in \mathbb{R}^{p \times q }$ the following holds
\begin{align} \label{eq:transProp}
     \text{vecc}( \matr{A} \matr{X} \matr{B})   \sim \mathcal{N}_{m q} (  \text{vecc}( \matr{A} \pmb{\mu}\matr{B}), (\matr{B}' \matr{\Sigma}_V\matr{B} ) \otimes (\matr{A}  \matr{\Sigma}_G \matr{A}' )).
\end{align}

\end{theorem}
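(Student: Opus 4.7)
The plan is to reduce the statement to the standard fact that an affine transformation of a multivariate normal vector is again multivariate normal, by rewriting $\matr{A}\matr{X}\matr{B}$ in vectorized form. The key algebraic tool is the well-known vectorization identity
\begin{equation*}
\vecc(\matr{A}\matr{X}\matr{B}) = (\matr{B}' \otimes \matr{A})\, \vecc(\matr{X}),
\end{equation*}
valid for any conformable $\matr{A}$, $\matr{X}$, $\matr{B}$. I would first state (or quickly verify) this identity, since everything that follows is just a mechanical application of it.

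Next, since $\vecc(\matr{X})$ is assumed to be $\mathcal{N}_{np}(\vecc(\pmb{\mu}), \matr{\Sigma}_V \otimes \matr{\Sigma}_G)$, and $\vecc(\matr{A}\matr{X}\matr{B})$ is obtained from $\vecc(\matr{X})$ by multiplying on the left by the fixed matrix $\matr{B}' \otimes \matr{A}$, it follows immediately that $\vecc(\matr{A}\matr{X}\matr{B})$ is Gaussian. Its mean is
\begin{equation*}
(\matr{B}' \otimes \matr{A})\, \vecc(\pmb{\mu}) \;=\; \vecc(\matr{A}\pmb{\mu}\matr{B}),
\end{equation*}
by the same vectorization identity applied to $\pmb{\mu}$. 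Its covariance matrix is
\begin{equation*}
(\matr{B}' \otimes \matr{A})\,(\matr{\Sigma}_V \otimes \matr{\Sigma}_G)\,(\matr{B}' \otimes \matr{A})'
\;=\; (\matr{B}' \otimes \matr{A})\,(\matr{\Sigma}_V \otimes \matr{\Sigma}_G)\,(\matr{B} \otimes \matr{A}').
\end{equation*}
Applying the mixed-product property of the Kronecker product, $(\matr{P} \otimes \matr{Q})(\matr{R} \otimes \matr{S}) = (\matr{P}\matr{R}) \otimes (\matr{Q}\matr{S})$, twice, the right-hand side collapses to $(\matr{B}'\matr{\Sigma}_V\matr{B}) \otimes (\matr{A}\matr{\Sigma}_G\matr{A}')$, which is exactly the claimed covariance.

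There is no real obstacle: the whole proof is a two-line computation once the vectorization identity and the Kronecker mixed-product rule are in hand. The only point requiring a small amount of care is the transpose of a Kronecker product, $(\matr{B}' \otimes \matr{A})' = \matr{B} \otimes \matr{A}'$, which is standard but worth writing out explicitly. Since $\matr{\Sigma}_V$ and $\matr{\Sigma}_G$ are positive semi-definite, so are $\matr{B}'\matr{\Sigma}_V\matr{B}$ and $\matr{A}\matr{\Sigma}_G\matr{A}'$, and hence so is their Kronecker product, so the resulting distribution is well-defined as a (possibly degenerate) Gaussian on $\mathbb{R}^{mq}$.
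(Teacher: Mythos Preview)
Your proof is correct and takes a genuinely different route from the paper's. You argue at the level of the full vectorized distribution: apply the identity $\vecc(\matr{A}\matr{X}\matr{B}) = (\matr{B}'\otimes\matr{A})\vecc(\matr{X})$, then invoke closure of the multivariate normal under linear maps and simplify the covariance using the mixed-product and transpose rules for Kronecker products. The paper instead works entrywise: it notes that $\matr{A}\matr{X}\matr{B}$ is Gaussian because it is a linear image of a Gaussian, reads off the mean by linearity, and then computes $\Cov((\matr{A}\matr{X}\matr{B})_{ik},(\matr{A}\matr{X}\matr{B})_{jl})$ directly from the factorization $\Cov(X_{st},X_{rd}) = (\matr{\Sigma}_G)_{sr}(\matr{\Sigma}_V)_{td}$, obtaining $(\matr{A}\matr{\Sigma}_G\matr{A}')_{ij}(\matr{B}'\matr{\Sigma}_V\matr{B})_{kl}$ after separating the sums. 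Your approach is shorter and more structural, but it presupposes the $\vecc$ identity and Kronecker algebra as black boxes; the paper's approach is more elementary in that it only uses bilinearity of covariance and the definition of the matrix normal, at the cost of some index bookkeeping.
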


\begin{proof}
A proof is recalled in the appendix. 
\end{proof}

\subsection{Graph signals} \label{sec:graphsig}
Let $(\mathcal{V}, \mathcal{E}) \subset \mathbb{R}^n \times (\mathbb{R}^n \times\mathbb{R}^n )$ be a graph where $\mathcal{V} = \{1,\cdots,n\}$ denotes the set of nodes, and $\mathcal{E}:= \{ (i,j) | w_{ij} \neq 0, i<j \}$  the set of edges with elements $e = (i,j)$ called edges. 
Of central importance in graph theory is the so-called Laplacian matrix $\matr{L} \in \mathbb{R}^{n \times n}$, see \cite{merris1994laplacian}, which can be defined as
\begin{align*}
    \matr{L} = \matr{D} - \matr{W},
\end{align*}
where $\matr{D} $ is a diagonal matrix of the row sums of $\matr{W}$, $d_i := \sum_{j=1}^n w_{ij}$ and $\matr{W}$ is a weight matrix. A weight matrix $\matr{W} \in \mathbb{R}^{n \times n}$ is a symmetric  matrix with $w_{ij} \geq 0$ for $i \neq j$ and $w_{ii} = 0$. The weight matrix can be associated with the graph $(\mathcal{V}, \mathcal{E})$. Graphs are typically used to visualize and capture relations between different nodes $i,j \in \mathcal{V}$, whereas the weights $w_{ij}$ encode the strength of the relation between the former. The larger a weight $w_{ij}$ is, the stronger $i$ and $j$ are related. An important notion is the neighbourhood  $\mathcal{N}(i)$ of a node $i$, i.e. $\mathcal{N}(i) := \{ j  | w_{ij} \neq 0 \}$. It consists of all nodes that are related to $i$.

 The Laplacian matrix encodes important information about the graph structure. Namely, for any vector $\vec{y} \in \mathbb{R}^n,$ the following property holds, see \cite{merris1994laplacian}:
\begin{align} \label{eq:lapprop}
    \vec{y}'\matr{L}\vec{y} = \frac{1}{2} \sum_{i,j = 1}^n (y_i-y_j)^2 w_{ij}.
\end{align}
 Inspecting the right-hand side of (\ref{eq:lapprop}), one can see that if a weight $w_{ij}$ is big, then the difference $y_i-y_j$ contributes more to $\vec{y}'\matr{L}\vec{y}$ than differences corresponding to smaller weights. As mentioned before, the weights $w_{ij}$ encode a presumed relation, and  one can take advantage of (\ref{eq:lapprop}) to define a distribution on $\vec{y} \in \mathbb{R}^n$ such that certain differences are more probable than others. Such model assumptions are regularly made in graph signal processing, see \cite{zhang2015graph,dong2016learning,kalofolias2016learn}, by assuming $\vec{y} \sim \mathcal{N}(\pmb{\mu},\matr{L}^+)$, where $\matr{L}^+$ denotes the Moore-Penrose pseudo-inverse of the matrix $\matr{L}$ and $\pmb{\mu}$ is a vector of $\mathbb{R}^n$. Note that under this model  the density of $\vec{y}$ is given by
\begin{align*}
    p(\vec{y}) \propto \exp\bigg(- \frac{1}{4} \sum_{i,j = 1}^n ((y_i-\mu_i)-(y_j-\mu_j))^2 w_{ij}\bigg).
\end{align*}

\begin{example}[Weights for spatially indexed data] If we can assume that the indices $i = 1,\cdots,n$ of the samples $\vec{x}_i$ refer to spatial positions, say $\vec{s}_i \in \mathbb{R}^2$, then a common choice is to set $w_{ij} = g( \norm{\vec{s}_i-\vec{s}_j}^2)$, where $g:\mathbb{R}_{\geq 0} \rightarrow \mathbb{R}_{\geq 0} $ is a non-increasing function, guaranteeing that the further apart $\vec{s}_i$ and $\vec{s}_j$ are the lower is $w_{ij}.$.Popular choices of $g$ include the Gaussian kernel, respectively the box kernel, leading to 
\begin{align*}
    w_{ij} = \exp\bigg(-\frac{\norm{\vec{s}_i-\vec{s}_j}^2}{2\sigma^2}\bigg),
\end{align*}
respectively
\begin{align*}
    w_{ij} = \vec{1}\{\norm{\vec{s}_i-\vec{s}_j} \leq \sigma \},
\end{align*}
where $\sigma$ is a tuning parameter. The Gaussian kernel will lead to a weight matrix $\matr{W}$ that has no zero entries, due to its smoothness, whereas the box kernel typically leads to sparse weight matrices $\matr{W}$ with entries in $\{0,1\}$.
For completeness, it is also important to mention the  K-nearest neighbors kernel, $K \in \mathbb{N}$, leading to weights
\begin{align*}
    w_{ij} = 
    \begin{cases}
        1 \text{ if } \vec{s}_j \text{ one of the closest $K$ points to } \vec{s}_i \\
        0 \text{ else }.
    \end{cases}
\end{align*}
Typically a post-processing step is applied to symmetrize the weights $w_{ij}$.
\end{example}

For a broader introduction to graph signals, we refer to \cite{stankovic2019graph,stankovic2019graph2,stankovic2020graph3,ortega2018graph}.

\subsection{Network indexed data} \label{subsec:nid}

In light of subsection \ref{sec:matnorm}, the dependence structure  of the samples $\vec{x}_i \in \mathbb{R}^p$ can be determined by the choice of a graph Laplacian $\matr{L}$.  We then assume that the matrix  $\matr{X} \in \mathbb{R}^{n \times p}$, whose rows are the samples $\vec{x}_i$, follows a matrix normal distribution
\begin{align} \label{eq:model}
    \vecc{(\matr{X})} \sim \mathcal{N}_{np} (\vecc{(\pmb{\mu})}, \matr{\Sigma}_V \otimes \matr{L}^+)
\end{align}
with $\matr{\Sigma}_V$ having full rank.
For readability, we write 
 $\vec{x^{\mu}}_i$ for the $i$-th row of $\matr{X}^{\mu}=\matr{X}-\pmb{\mu}$ and $l_{ij}$ for the elements of $\matr{L}^{+}$. Additionally, let $\matr{L}^{+/2}$ denote the square root of $\matr{L}^{+}$.  
Note that the particular case of independence between the vectors $\vec{{x}}_i$, is not a particular case of model \eqref{eq:model} because the identity is not the inverse of any Laplacian matrix.
 
 The following lemma    will be useful in later sections under assumption (\ref{eq:model}). 
\begin{lemma} \label{lem:maha}
    For $\matr{X}$ following  model (\ref{eq:model}), the total Mahalanobis distance $$\md2(\matr{X}) := \vecc(\matr{X}-\pmb{\mu})'(\matr{\Sigma}_V \otimes \matr{L}^+)^{+}\vecc(\matr{X}-\pmb{\mu})$$ can be decomposed as follows:
    \begin{equation} \label{eq:md2L}
    \md2(\matr{X})=\frac{1}{2}\sum_{i,j = 1}^n    (\vec{x^{\mu}}_i- \vec{x^{\mu}}_j)'\matr{\Sigma}_V^{-1} (\vec{x^{\mu}}_i- \vec{x^{\mu}}_j)w_{ij}.
\end{equation}
\end{lemma}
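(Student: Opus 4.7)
The plan is to reduce $\md2(\matr{X})$ to a trace expression involving $\matr{L}$ and then invoke the Laplacian quadratic-form identity \eqref{eq:lapprop} applied componentwise.

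First, I would simplify the pseudo-inverse. Since $\matr{\Sigma}_V$ has full rank and Kronecker products satisfy $(\matr{A}\otimes \matr{B})^+=\matr{A}^+\otimes \matr{B}^+$, and since $(\matr{L}^+)^+=\matr{L}$ for the symmetric Laplacian, we obtain
\begin{equation*}
(\matr{\Sigma}_V\otimes \matr{L}^+)^+ = \matr{\Sigma}_V^{-1}\otimes \matr{L}.
\end{equation*}
Then I would use the standard identities $(\matr{B}^\top\otimes \matr{A})\vecc(\matr{Y})=\vecc(\matr{A}\matr{Y}\matr{B})$ and $\vecc(\matr{U})^\top\vecc(\matr{V})=\Tr(\matr{U}^\top\matr{V})$, together with the symmetry of $\matr{\Sigma}_V^{-1}$, to rewrite
\begin{equation*}
\md2(\matr{X})=\vecc(\matr{X}^{\mu})^\top(\matr{\Sigma}_V^{-1}\otimes \matr{L})\vecc(\matr{X}^{\mu})=\Tr\!\bigl((\matr{X}^{\mu})^\top \matr{L}\,\matr{X}^{\mu}\,\matr{\Sigma}_V^{-1}\bigr).
\end{equation*}

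Next I would extract the sum over edges. The entries of the $p\times p$ matrix $(\matr{X}^{\mu})^\top\matr{L}\matr{X}^{\mu}$ are bilinear forms in the columns of $\matr{X}^{\mu}$, so by polarizing \eqref{eq:lapprop} (which follows immediately by writing $\vec{u}^\top\matr{L}\vec{v}=\tfrac14[(\vec{u}+\vec{v})^\top\matr{L}(\vec{u}+\vec{v})-(\vec{u}-\vec{v})^\top\matr{L}(\vec{u}-\vec{v})]$), one obtains
\begin{equation*}
\bigl((\matr{X}^{\mu})^\top\matr{L}\matr{X}^{\mu}\bigr)_{kl}=\tfrac12\sum_{i,j=1}^n (x^{\mu}_{ik}-x^{\mu}_{jk})(x^{\mu}_{il}-x^{\mu}_{jl})\,w_{ij}.
\end{equation*}
Multiplying by $\matr{\Sigma}_V^{-1}$, taking the trace, and interchanging the $(i,j)$ sum with the $(k,l)$ sum yields
\begin{equation*}
\md2(\matr{X})=\tfrac12\sum_{i,j=1}^n w_{ij}\sum_{k,l=1}^p (x^{\mu}_{ik}-x^{\mu}_{jk})(\matr{\Sigma}_V^{-1})_{kl}(x^{\mu}_{il}-x^{\mu}_{jl}),
\end{equation*}
which is exactly \eqref{eq:md2L} once the inner double sum is recognized as $(\vec{x^{\mu}}_i-\vec{x^{\mu}}_j)^\top\matr{\Sigma}_V^{-1}(\vec{x^{\mu}}_i-\vec{x^{\mu}}_j)$.

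The only delicate point is justifying $(\matr{\Sigma}_V\otimes\matr{L}^+)^+=\matr{\Sigma}_V^{-1}\otimes\matr{L}$: the Kronecker rule for Moore--Penrose pseudo-inverses is standard, and $(\matr{L}^+)^+=\matr{L}$ holds for any matrix, so this step is clean. Everything else is algebraic bookkeeping via the $\vecc/\Tr/\otimes$ identities and the polarized Laplacian formula.
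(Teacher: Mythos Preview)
Your proof is correct and follows essentially the same route as the paper: reduce $\md2(\matr{X})$ to a trace via the $\vecc$/Kronecker identities, then apply the Laplacian quadratic-form property \eqref{eq:lapprop}. The only cosmetic difference is that the paper passes through square roots, writing $\md2(\matr{X})=\norm{\matr{L}^{1/2}\matr{X}^{\mu}\matr{\Sigma}_V^{-1/2}}_F^2=\Tr(\matr{Z}'\matr{L}\matr{Z})$ with $\matr{Z}=\matr{X}^{\mu}\matr{\Sigma}_V^{-1/2}$, so that \eqref{eq:lapprop} applies directly to each column of $\matr{Z}$ without polarization; you instead keep $\matr{\Sigma}_V^{-1}$ intact and polarize \eqref{eq:lapprop} to handle the off-diagonal entries of $(\matr{X}^{\mu})'\matr{L}\matr{X}^{\mu}$. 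Both variants are equally clean.
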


\begin{proof}
See appendix.
\end{proof}

By \eqref{eq:md2L} in the previous lemma  one can see that the density of $\vecc{(\matr{X})} $ is proportional to
\begin{align*}
    \exp\bigg( -\frac{1}{4}  \sum_{i,j = 1}^n    (\vec{x^{\mu}}_i- \vec{x^{\mu}}_j)'\matr{\Sigma}_V^{-1} (\vec{x^{\mu}}_i- \vec{x^{\mu}}_j)w_{ij} \bigg),
\end{align*}
which provides an insight into the effect of the magnitude of the weights $w_{ij}$. Similar to the discussion in subsection \ref{sec:graphsig}, differences $\vec{x^{\mu}}_i- \vec{x^{\mu}}_j$ over edges $(i,j)$ corresponding to higher weights are the most influential. 
As the density of $\vecc{(\matr{X})} $ depends only on $\Delta_{ij} := (\vec{x^{\mu}}_i- \vec{x^{\mu}}_j)'\matr{\Sigma}_V^{-1} (\vec{x^{\mu}}_i- \vec{x^{\mu}}_j)w_{ij}, $ we derive in the following lemma the distribution of the $\Delta_{ij}$.

\begin{lemma} \label{lem:deltadistr}
For $\matr{X}$ following  model (\ref{eq:model}),
 we  have 
\begin{align}
    \Delta_{ij} \sim w_{ij}(l_{ii}+l_{jj}-2l_{ij})\chi^2(p) \label{eq:deltradistr}
\end{align}
for any $i,j \in \{1,\dots,n \}$.


\end{lemma}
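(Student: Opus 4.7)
The plan is to write the difference $\vec{x^{\mu}}_i - \vec{x^{\mu}}_j$ as a linear transformation of $\matr{X}^{\mu}$, apply Theorem \ref{thm:mattrans} to identify its distribution as a $p$-variate normal, then reduce $\Delta_{ij}$ to a standard quadratic form in a Gaussian vector.

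First I would note that $\vec{x^{\mu}}_i - \vec{x^{\mu}}_j = \matr{X}^{\mu\prime}(\vec{e}_i - \vec{e}_j)$, where $\vec{e}_i$ denotes the $i$-th canonical basis vector of $\mathbb{R}^n$. Equivalently, taking $\matr{A} = (\vec{e}_i - \vec{e}_j)' \in \mathbb{R}^{1\times n}$ and $\matr{B} = \matr{I}_p$ in Theorem \ref{thm:mattrans}, and using that $\matr{X}^{\mu}$ follows the matrix normal distribution with parameters $(\vec{0}, \matr{L}^+, \matr{\Sigma}_V)$, one obtains
\begin{align*}
    (\vec{e}_i - \vec{e}_j)' \matr{X}^{\mu} \;\sim\; \mathcal{N}_{p}\!\bigl(\vec{0},\, \matr{\Sigma}_V \otimes \bigl((\vec{e}_i-\vec{e}_j)'\matr{L}^+(\vec{e}_i-\vec{e}_j)\bigr)\bigr).
\end{align*}
The right scalar factor equals $l_{ii}+l_{jj}-2l_{ij}$, so transposing yields $\vec{x^{\mu}}_i - \vec{x^{\mu}}_j \sim \mathcal{N}_p(\vec{0},(l_{ii}+l_{jj}-2l_{ij})\matr{\Sigma}_V)$.

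Next, assuming $l_{ii}+l_{jj}-2l_{ij}>0$, I would set $\vec{y} = (l_{ii}+l_{jj}-2l_{ij})^{-1/2}(\vec{x^{\mu}}_i - \vec{x^{\mu}}_j)$, which is $\mathcal{N}_p(\vec{0},\matr{\Sigma}_V)$, and invoke the classical result $\vec{y}'\matr{\Sigma}_V^{-1}\vec{y}\sim \chi^2(p)$ (valid because $\matr{\Sigma}_V$ has full rank by assumption). Unwinding the normalization, $(\vec{x^{\mu}}_i - \vec{x^{\mu}}_j)'\matr{\Sigma}_V^{-1}(\vec{x^{\mu}}_i - \vec{x^{\mu}}_j)\sim (l_{ii}+l_{jj}-2l_{ij})\chi^2(p)$, and multiplying by $w_{ij}$ gives the claimed distribution \eqref{eq:deltradistr}. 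The edge case $l_{ii}+l_{jj}-2l_{ij}=0$ forces the difference to be zero almost surely, which is consistent with the degenerate law $0\cdot\chi^2(p)$, so no separate argument is needed.

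The only mildly delicate point is the application of Theorem \ref{thm:mattrans} when the ambient covariance factor is the pseudo-inverse $\matr{L}^+$ rather than a full-rank matrix; but the transformation formula carries over verbatim as long as one reads the resulting covariance $(\vec{e}_i-\vec{e}_j)'\matr{L}^+(\vec{e}_i-\vec{e}_j)$ as a (nonnegative) scalar. Hence there is essentially no real obstacle, and the statement follows in a few lines.
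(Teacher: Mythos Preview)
Your proposal is correct and follows essentially the same approach as the paper: both apply Theorem \ref{thm:mattrans} with $\matr{A}=(\vec{e}_i-\vec{e}_j)'$ and $\matr{B}=\matr{I}_p$ to identify the law of the edgewise difference as $\mathcal{N}_p(\vec{0},(l_{ii}+l_{jj}-2l_{ij})\matr{\Sigma}_V)$, then normalize and invoke the standard $\chi^2(p)$ result for the Mahalanobis quadratic form. Your additional remark on the degenerate case $l_{ii}+l_{jj}-2l_{ij}=0$ and on the pseudo-inverse is a minor nicety not explicitly addressed in the paper's proof.
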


\begin{proof}
See appendix.
\end{proof}

Lemma \ref{lem:deltadistr}  allows for a decision rule for the detection of edgewise outliers, where typically $\tau$ is chosen as $\chi^2_{p,0.975}$.

\paragraph{Edge outlier detection rule} \label{def:outl}
We say that an edge $(i,j) \in \mathcal{E}$ is an outlier if 
$\frac{\Delta_{ij}}{ w_{ij}(l_{ii}+l_{jj}-2l_{ij})} > \tau $.\\

This decision rule assumes that we know the model parameters $\pmb{\mu}$ and $\matr{\Sigma}_V.$ The following section treats their maximum likelihood estimation and introduces a robust alternative.

\begin{remark}
Note that for any node $i$ we also have ${\vec{x^{\mu}}_i}'\matr{\Sigma}_V^{-1} \vec{x^{\mu}}_i \sim l_{ii}\chi^2(p) $. With this, we could also define node outliers as a node $i$ for which $\frac{{\vec{x^{\mu}}_i}'\matr{\Sigma}_V^{-1} \vec{x^{\mu}}_i }{l_{ii}} > \tau$ holds. 
\end{remark}

\section{Estimation}
\subsection{Maximum likelihood estimation}
In the following, we allow for the mean function $\pmb{\mu}$ to be parametrized by some $\pmb{\theta} \in \mathbb{R}^{q \times \Tilde{p}}$ where $\Tilde{p} \leq p$, which is usually taken equal to the data dimension $p$. The following theorem derives the  maximum likelihood equations for $(\pmb{\theta},\matr{\Sigma}_V)$, which is a special case of the multivariate normal regression model, see \cite{seber2009multivariate}. 

\begin{theorem} \label{thm::likelihoodestimators}
    Assume that $\pmb{\mu}$ is parametrized by $\pmb{\theta} \in \mathbb{R}^{q \times \Tilde{p}}$. Then the maximum likelihood estimators for $\matr{\Sigma}_V$ and $\vec{\theta}$ with $\vecc{(\matr{X})} \sim \mathcal{N}_{np} ( \vecc{(\pmb{\mu}(\vec{\theta}))}, \matr{\Sigma}_V \otimes \matr{L}^+)$ satisfy the following equations
\begin{align*}
    & \sum_{i=1}^n \sum_{k = 1}^p (\matr{L}(\pmb{\mu}(\vec{\theta})-\matr{X})\matr{\Sigma}_V^{-1})_{ik}\frac{\partial {\mu}(\vec{\theta})_{ik}}{\partial {\theta}_{ml}} = 0 \text{ for }  \,\, \text{ and } \,\, {\matr{\Sigma}}_V= \frac{1}{n} (\matr{X}-\pmb{\mu}(\vec{\theta}))'\matr{L}(\matr{X}-\pmb{\mu}(\vec{\theta})).
\end{align*}
If $\pmb{\mu}(\vec{\theta})$ is modeled as $\pmb{\mu}(\pmb{\theta}) = \matr{Z}\pmb{\theta}$ for a fixed $\matr{Z} \in \mathbb{R}^{n \times q }$ and $\pmb{\theta} \in \mathbb{R}^{q \times p}$ , which implies that $\tilde{p}=p$, then these equations become 
\begin{align*}
     &  \matr{Z}'\matr{L}  \matr{Z} \pmb{\theta} =  \matr{Z}'\matr{L} \matr{X}\,\, \text{ and } \,\, {\matr{\Sigma}}_V= \frac{1}{n} (\matr{X}-\matr{Z} \pmb{\theta})'\matr{L}(\matr{X}-\matr{Z} \pmb{\theta}).
\end{align*}

For $\matr{Z} = \vec{1}_n$ and $\pmb{\theta} \in \mathbb{R}^{1 \times p}$ the first equation holds for any $\pmb{\theta}$ and the estimator for $\matr{\Sigma}_V$ is
\begin{align*}
    {\matr{\Sigma}}_V= \frac{1}{n} \matr{X}'\matr{L}\matr{X}.
\end{align*}

\end{theorem}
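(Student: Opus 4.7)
The plan is to write out the Gaussian log-likelihood, compute its gradients in $\vec{\theta}$ and $\matr{\Sigma}_V$, set them to zero, and then specialize to the linear and constant-mean cases. Using the Kronecker identity $\vecc(\matr{A})'(\matr{B}\otimes\matr{C})\vecc(\matr{A})=\Tr(\matr{C}\matr{A}\matr{B}\matr{A}')$ together with Lemma~\ref{lem:maha} (which already rewrites the quadratic form via $\matr{L}$ rather than $\matr{L}^+$), the Mahalanobis term becomes $\Tr((\matr{X}-\pmb{\mu})'\matr{L}(\matr{X}-\pmb{\mu})\matr{\Sigma}_V^{-1})$. For the log-determinant, I would treat $\matr{L}^+$ through its pseudo-determinant: the factorization gives $n\log|\matr{\Sigma}_V|$ plus a term in $\matr{L}^+$ alone, which is constant in the parameters. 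Up to additive constants,
\begin{equation*}
  \ell(\vec{\theta},\matr{\Sigma}_V) = -\frac{n}{2}\log|\matr{\Sigma}_V| - \frac{1}{2}\Tr\bigl((\matr{X}-\pmb{\mu}(\vec{\theta}))'\matr{L}(\matr{X}-\pmb{\mu}(\vec{\theta}))\matr{\Sigma}_V^{-1}\bigr).
\end{equation*}

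Next I would compute both score equations. By symmetry of $\matr{L}$ and $\matr{\Sigma}_V^{-1}$, the derivative of the trace with respect to $\mu_{ik}$ is $-2(\matr{L}(\matr{X}-\pmb{\mu})\matr{\Sigma}_V^{-1})_{ik}$; the chain rule through $\pmb{\mu}(\vec{\theta})$ delivers the first stated equation. Differentiating in $\matr{\Sigma}_V$ with the standard identities $\partial_{\matr{\Sigma}_V}\log|\matr{\Sigma}_V|=\matr{\Sigma}_V^{-1}$ and $\partial_{\matr{\Sigma}_V}\Tr(\matr{A}\matr{\Sigma}_V^{-1})=-\matr{\Sigma}_V^{-1}\matr{A}\matr{\Sigma}_V^{-1}$, applied to the symmetric $\matr{A}=(\matr{X}-\pmb{\mu})'\matr{L}(\matr{X}-\pmb{\mu})$, and pre- and post-multiplying by $\matr{\Sigma}_V$, yields $\hat{\matr{\Sigma}}_V = \tfrac{1}{n}(\matr{X}-\pmb{\mu})'\matr{L}(\matr{X}-\pmb{\mu})$.

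For the linear case $\pmb{\mu}(\vec{\theta})=\matr{Z}\vec{\theta}$, I would plug $\partial_{\theta_{ml}}\mu_{ik}=Z_{im}\delta_{kl}$ into the first score equation, which collapses to $(\matr{Z}'\matr{L}(\matr{Z}\vec{\theta}-\matr{X})\matr{\Sigma}_V^{-1})_{ml}=0$; right-multiplying by the invertible $\matr{\Sigma}_V$ gives the normal equation $\matr{Z}'\matr{L}\matr{Z}\vec{\theta}=\matr{Z}'\matr{L}\matr{X}$. The specialization $\matr{Z}=\vec{1}_n$ then follows from the defining property $\matr{L}\vec{1}_n=\vec{0}$ of the Laplacian: both sides of the normal equation vanish identically, so the first equation holds for every $\vec{\theta}$, and expanding $(\matr{X}-\vec{1}_n\vec{\theta})'\matr{L}(\matr{X}-\vec{1}_n\vec{\theta})$ cancels every $\vec{\theta}$-dependent term, leaving $\hat{\matr{\Sigma}}_V=\tfrac{1}{n}\matr{X}'\matr{L}\matr{X}$.

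The only real subtlety I anticipate is handling the pseudo-inverse cleanly. Because $\matr{L}$ is singular, the density in \eqref{eq:model} lives on a proper subspace of $\mathbb{R}^{np}$, and the likelihood should strictly be written against the pseudo-determinant with $\pmb{\mu}$ constrained to the appropriate affine subspace. This does not affect the gradients in $\vec{\theta}$ or $\matr{\Sigma}_V$, because the $\matr{L}^+$ determinant contribution is parameter-free; but it is precisely the mechanism behind the indeterminacy in the $\matr{Z}=\vec{1}_n$ case, where the component of $\vec{\theta}$ lying in the kernel of $\matr{L}$ is unidentifiable and the first estimating equation consequently trivializes.
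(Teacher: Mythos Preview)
Your proposal is correct and follows essentially the same route as the paper: rewrite the quadratic form as $\Tr((\matr{X}-\pmb{\mu})'\matr{L}(\matr{X}-\pmb{\mu})\matr{\Sigma}_V^{-1})$ via Lemma~\ref{lem:maha}, differentiate in $\matr{\Sigma}_V$ (the paper differentiates in $\matr{\Sigma}_V^{-1}$, which is cosmetically different but equivalent) and in $\vec{\theta}$ through the chain rule, then specialize to $\pmb{\mu}=\matr{Z}\vec{\theta}$ and use $\matr{L}\vec{1}_n=\vec{0}$. Your explicit remark on the pseudo-determinant and the resulting non-identifiability when $\matr{Z}=\vec{1}_n$ is, if anything, a bit more careful than the paper's treatment, which simply absorbs the $\matr{L}^+$ term into the constants.
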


\begin{proof}
See appendix.
\end{proof}

If $\pmb{\mu}{(\pmb{\theta})}$ is of the form $\pmb{\mu}{(\pmb{\theta})} = \matr{Z}\pmb{\theta}$, we know by Theorem \ref{thm::likelihoodestimators} that the estimators can be written as 
\begin{align} 
    &\pmb{\hat \theta} =  (\matr{Z}'\matr{L}  \matr{Z})^{-1}\matr{Z}'\matr{L} \matr{X}\label{eq:estMu} \\
    &{\matr{\hat \Sigma}}_V= \frac{1}{n} (\matr{X}-\matr{Z} \pmb{\hat \theta})'\matr{L}(\matr{X}-\matr{Z} \pmb{\hat \theta}) \label{eq:estCovV}.
\end{align}

Using another property of the Laplacian matrix $\matr{L}$, we can rewrite equations (\ref{eq:estMu}) and (\ref{eq:estCovV}). It is well known, see \cite{grady2010discrete}, that $\matr{L}$ can also be written as $\matr{L}:= \matr{M}' \matr{M}$, where $\matr{M} \in \mathbb{R}^{|\mathcal{E}| \times n}$ is defined entrywise for each edge $e=(i,j) \in \mathcal{E}$ and $l \in \{1,\dots,n\}$ as
\begin{align*}
    \matr{M}_{(i,j),l} := \begin{cases}
     \sqrt{w_{ij}}& \text{ if l = i } \\
     -\sqrt{w_{ij}}& \text{ if l = j } \\
     0 & \text{ else }
    \end{cases}.
\end{align*}

For any two matrices $\matr{A}$ and $\matr{B}$ of appropriate dimensions we can therefore rewrite a matrix product of the form $\matr{A}'\matr{L}\matr{B}$, as $\matr{A}'\matr{L}\matr{B} = (\matr{M} \matr{A})'(\matr{M} \matr{B}) = \sum_{i<j} (\vec{a}_{i,:}-\vec{a}_{j,:})(\vec{b}_{i,:}-\vec{b}_{j,:})'w_{ij}$, where $\vec{a}_{i,:}$ denotes the i-th row of $\matr{A}$; and similarly for the other subscripts.  The terms $\vec{a}_{i,:}-\vec{a}_{j,:}$ and $\vec{b}_{i,:}-\vec{b}_{j,:}$ are matrix terms differences from one node $i$ to another node $j$. 
Applying this to the matrix products in (\ref{eq:estMu}) and (\ref{eq:estCovV}) we have
\begin{align}
& \pmb{\hat \theta} =  \Big( \sum_{i<j} (\vec{z}_{i,:}-\vec{z}_{j,:})(\vec{z}_{i,:}-\vec{z}_{j,:})'w_{ij} \Big)^{-1} \Big(\sum_{i<j} (\vec{z}_{i,:}-\vec{z}_{j,:})(\vec{x}_{i}-\vec{x}_{j})'w_{ij} \Big)  \label{eq:estMuEdge} \\
 &  {\matr{\hat \Sigma}}_V = \frac{1}{n} \sum_{i<j}(\vec{x^{\mu}}_i- \vec{x^{\mu}}_j)(\vec{x^{\mu}}_i- \vec{x^{\mu}}_j)'w_{ij} \label{eq:estCovVEdge}.
\end{align}
From this, it is obvious that abnormal edgewise differences $\vec{z}_{i,:}-\vec{z}_{j,:}$ or $\vec{x^{\mu}}_i- \vec{x^{\mu}}_j$ can lead to distorted estimations in (\ref{eq:estMu}) and (\ref{eq:estCovV}). 
These estimators need therefore to be  robustified. So far we assumed model (\ref{eq:model}) and deduced that the squared Mahalanobis distance can be written, see equation (\ref{eq:md2L}), in terms of $(\vec{x^{\mu}}_i- \vec{x^{\mu}}_j)'\matr{\Sigma}_V^{-1} (\vec{x^{\mu}}_i- \vec{x^{\mu}}_j)w_{ij}$. As the  maximum likelihood estimators (\ref{eq:estMu}) and (\ref{eq:estCovV}) are derived from minimizing the negative log-likelihood
\begin{align} \label{opt:edgewiseopt} 
     \sum_{i < j} (\vec{x^{\mu}}_i- \vec{x^{\mu}}_j)'\matr{\Sigma}_V^{-1} (\vec{x^{\mu}}_i- \vec{x^{\mu}}_j)w_{ij}  + n \log(|{\matr{\Sigma}}_V|)
 \end{align}
with respect to $(\matr{\vec{\theta}},\matr{\Sigma}_\matr{V})$, where the edgewise differences $\vec{x^{\mu}}_i- \vec{x^{\mu}}_j$
appear again, it seems natural to take a trimmed approach to robustify the estimators.

\subsection{Robust estimation with edgewise MCD} \label{subsec:edgemcd}


Trimmed estimators have been frequently used \cite{rousseeuw1984least,rousseeuw1985multivariate,bednarski1993trimmed,hadi1997maximum} in the case of outliers.
Applying the idea of trimming to (\ref{opt:edgewiseopt}) we propose to solve the following problem to find robust estimates $\vec{\theta}$ and ${\matr{\Sigma}}_{\matr{V}}$:
 \begin{align} \label{edgeMCD}
     \min_{\matr{\vec{\theta}},{\matr{\Sigma}}_V}\sum_{k = 1}^{h}(\vec{x^{\mu}}_{\pi(k)_1}-\vec{x^{\mu}}_{\pi(k)_2})'{\matr{\Sigma}}_V^{-1}(\vec{x^{\mu}}_{\pi(k)_1}-\vec{x^{\mu}}_{\pi(k)_2})w_{\pi(k)}  + (n h |\mathcal{E}|^{-1}) \log(|{\matr{\Sigma}}_V|).
 \end{align}
 where, $h\in [\frac{|\mathcal{E}|+p+1}{2},|\mathcal{E}|]$ denotes a cutoff to be specified, and $\pi$ is the function mapping an element $\{1,\dots,|\mathcal{E}|\}$ to an edge $(i,j) \in \mathcal{E}$, corresponding to the ordering, from lowest to highest, of $\Delta_{ij}$, i.e. $\Delta_{\pi(1)} \leq \Delta_{\pi(2)}\leq \Delta_{\pi(3)} \leq \dots.$ We recall that $\Delta_{ij}:=(\vec{x^{\mu}}_i- \vec{x^{\mu}}_j)'{\matr{\Sigma}}_V^{-1} (\vec{x^{\mu}}_i- \vec{x^{\mu}}_j)w_{ij}$ and we denote $\pi(k) = (\pi(k)_1,\pi(k)_2)$. 
 Remark that for $h=|\mathcal{E}|$, solving \eqref{edgeMCD} is equivalent to solving  \eqref{opt:edgewiseopt} and leads to the estimators (\ref{eq:estMu}) and (\ref{eq:estCovV}). For $h < |\mathcal{E}|$, an edge $(i,j)$ with comparatively  high $\Delta_{ij}$ has no influence as it does not appear in (\ref{edgeMCD}) and will not distort the estimators.

Given initial estimates $\matr{\vec{\theta}}^{0}$ and ${\matr{\Sigma}}_V^{0},$ we set $\vec{x^{\mu^0}}_i = \vec{x}_i - \matr{\vec{\theta}}^{0}\vec{z}_i$ for all $i=1,\dots,n$, and compute updated estimates by the following steps. This algorithm which employs an edgewise version of the C-step of the popular MCD algorithm, see \cite{rousseeuw1985multivariate,rousseeuw1999fast}, iterates between finding the currently most probable samples and updating the parameters:
\begin{itemize}
    \item[0)] Set t=0.
    \item[1)] Order for each edge $(i,j) \in \mathcal{E}$ the quantities $\Delta_{ij}^t:=(\vec{x^{\mu^{t}}}_i- \vec{x^{\mu^{t}}}_j)'({\matr{\Sigma}}_V^{t})^{-1} (\vec{x^{\mu^{t}}}_i- \vec{x^{\mu^{t}}}_j)w_{ij}$ from lowest to largest and denote $\pi^t$ the corresponding order function.
    \item[2)] Update $\matr{\vec{\theta}}^{t}$ by:
    \begin{align*}
       &\matr{\Gamma}_{\matr{Z}'\matr{Z}}:=\sum_{k = 1}^{h}(\vec{z}_{\pi^t(k)_1}-\vec{z}_{\pi^t(k)_2})(\vec{z}_{\pi^t(k)_1}-\vec{z}_{\pi^t(k)_2})'w_{\pi^t(k)} \\
       &\matr{\Gamma}_{\matr{Z}'\matr{X}}:=\sum_{k = 1}^{h}(\vec{z}_{\pi^t(k)_1}-\vec{z}_{\pi^t(k)_2})(\vec{x}_{\pi^t(k)_1}-\vec{x}_{\pi^t(k)_2})'w_{\pi^t(k)} \\
       &\matr{\vec{\theta}}^{t+1} \leftarrow \matr{\Gamma}_{\matr{Z}'\matr{Z}}^{-1}\matr{\Gamma}_{\matr{Z}'\matr{X}}.
    \end{align*}
  \item[3)]  Update $\vec{x^{\mu^{t}}}$: $\vec{x^{\mu^{t+1}}}_i \leftarrow \vec{x}_i - \matr{\vec{\theta}}^{t+1}\vec{z}_i$ for all $i=1,\dots,n$.  
  \item[4)] Update ${\matr{\Sigma}}_V^{t}$: 
    \begin{align*}
        {\matr{\Sigma}}_V^{t+1} \leftarrow \frac{1}{n h |\mathcal{E}|^{-1}}\sum_{k = 1}^{h}(\vec{x^{\mu^{t+1}}}_{\pi^t(k)_1}-\vec{x^{\mu^{t+1}}}_{\pi^t(k)_2})(\vec{x^{\mu^{t+1}}}_{\pi^t(k)_1}-\vec{x^{\mu^{t+1}}}_{\pi^t(k)_2})'w_{\pi^t(k)}
    \end{align*}
    \item[5)] Increment $t$ and go back to step 1) until convergence. 
\end{itemize}

As in \cite{rousseeuw1999fast}, to prove that the suggested algorithm decreases the objective in every step, we will show the following chain of inequalities
\begin{align*}
     \sum_{k = 1}^{h}(\vec{x^{\mu^t}}_{\pi^t(k)_1}-&\vec{x^{\mu^t}}_{\pi^t(k)_2})'({\matr{\Sigma}}_V^{t})^{-1}(\vec{x^{\mu^t}}_{\pi^t(k)_1}-\vec{x^{\mu^t}}_{\pi^t(k)_2})'w_{\pi^t(k)} + (n h |\mathcal{E}|^{-1}) \log(|{\matr{\Sigma}}_V^t|) \\
     & \geq  \sum_{k = 1}^{h}(\vec{x^{\mu^{t+1}}}_{\pi^t(k)_1}-\vec{x^{\mu^{t+1}}}_{\pi^t(k)_2})'({\matr{\Sigma}}_V^{t+1})^{-1}(\vec{x^{\mu^{t+1}}}_{\pi^t(k)_1}-\vec{x^{\mu^{t+1}}}_{\pi^t(k)_2})'w_{\pi^t(k)}  +  (n h |\mathcal{E}|^{-1}) \log(|{\matr{\Sigma}}_V^{t+1}|) \\
     & \geq \sum_{k = 1}^{h}(\vec{x^{\mu^{t+1}}}_{\pi^{t+1}(k)_1}-\vec{x^{\mu^{t+1}}}_{\pi^{t+1}(k)_2})'({\matr{\Sigma}}_V^{t+1})^{-1}(\vec{x^{\mu^{t+1}}}_{\pi^{t+1}(k)_1}-\vec{x^{\mu^{t+1}}}_{\pi^{t+1}(k)_2})'w_{\pi^{t+1}(k)} +  (n h |\mathcal{E}|^{-1}) \log(|{\matr{\Sigma}}_V^{t+1}|)
\end{align*}
where $\pi^t$ is the ordering function from Step 1) at the $t$-th iteration and $\pi^{t+1}$ is the ordering function from Step 1) at the ($t$+1)-th iteration. The second inequality is trivial as $\pi^{t+1}$ maps exactly to those edges such that $\Delta_{ij}^{t+1}$ is ordered from lowest to highest. For the first inequality, note that $\matr{\vec{\theta}}^{t+1}$ and ${\matr{\Sigma}}_V^{t+1}$ are the minimizers of problem (\ref{edgeMCD}), with $\pi = \pi^t$ and respectively $\matr{\vec{\theta}}^{t+1}$ fixed. As the number of possible permutations $\pi$ is finite we can conclude that this algorithm converges. Typically we experienced convergence in less than ten cycles. 
We then perform one final reweighting step similar to the MCD algorithm \cite{lopuhaa1991breakdown,lopuhaa1999asymptotics}. For this, we make use of the distribution of $\Delta_{ij}$ as in (\ref{eq:deltradistr}) as follows. We perform the updates of step 2) to 4) for $\matr{\vec{\theta}}^{T}$ and ${\matr{\Sigma}}_V^{T}$ with $\pi$ mapping into the set of all edges $(i,j) \in \mathcal{E}$ with $\frac{\Delta_{ij}^T}{ w_{ij}(l_{ii}+l_{jj}-2l_{ij})} \leq \chi^2_{p,0.975} $, where $\chi^2_{p,0.975}$ is the $0.975$ quantile of a chi-square distributed variable with p degrees. Finally, we rescale the covariance estimate ${\matr{\Sigma}}_V^{T}$ by replacing it with ${\matr{\Sigma}}_V^{T+1} = c {\matr{\Sigma}}_V^{T}$, where $c$ is a constant such that $\med_{(i,j) \in \mathcal{E}}(\frac{\Delta_{ij}^T}{ w_{ij}(l_{ii}+l_{jj}-2l_{ij})}) = \chi^2_{p,0.5}$. 

Good initial starting parameters are essential. We consider four different initial estimates, similar to the deterministic MCD algorithm, see \cite{hubert2012deterministic}. We compute for each edge $(i,j) \in \mathcal{E}$ the weighted edgewise differences $(\vec{z}_i - \vec{z}_j)\sqrt{w_{ij}}$ and $(\vec{x}_i - \vec{x}_j)\sqrt{w_{ij}}$ and put these row-wise into the data matrix $\matr{Z}_{\mathcal{E}} \in \mathbb{R}^{|\mathcal{E}| \times q}$ and $\matr{X}_{\mathcal{E}} \in \mathbb{R}^{|\mathcal{E}| \times p}$. Note that with this, the solutions (\ref{eq:estMuEdge})-(\ref{eq:estCovVEdge}) can be rewritten in matrix notation
\begin{align} 
    &\pmb{\theta} =  (\matr{Z}_{\mathcal{E}}'\matr{Z}_{\mathcal{E}})^{-1}\matr{Z}_{\mathcal{E}}' \matr{X}_{\mathcal{E}}  \label{eq:edgewiseTheta} \\
    &{\matr{\Sigma}}_V= \frac{1}{n} (\matr{X}_{\mathcal{E}}-\matr{Z}_{\mathcal{E}} \pmb{\theta})'(\matr{X}_{\mathcal{E}}-\matr{Z}_{\mathcal{E}} \pmb{\theta}). \label{eq:edgewiseCov} 
\end{align}
As the matrix products involving $\matr{Z}_{\mathcal{E}}$, $\matr{X}_{\mathcal{E}}$ and $\matr{X}_{\mathcal{E}}-\matr{Z}_{\mathcal{E}} \pmb{\theta}$ can be thought of in terms of the population versions of covariance estimates, we compute robust starting estimates $\matr{\vec{\theta}}^{0}$ and ${\matr{\Sigma}}_V^{0}$ similar to the steps described in the deterministic MCD \cite{hubert2012deterministic} by transforming the data. 
We describe the method in terms of two general data matrices $\matr{R}$ and $\matr{T}$, which take the role of $\matr{Z}_{\mathcal{E}}$, $\matr{X}_{\mathcal{E}}$ and $\matr{X}_{\mathcal{E}}-\matr{Z}_{\mathcal{E}}\pmb{\theta}$. First, each column of $\matr{R}$ and $\matr{T}$ is scaled with a robust scale estimate to get  $\matr{R}^s$ and $\matr{T}^s$. In this paper, we use the robust Qn-scale estimator, see \cite{rousseeuw1993alternatives}.
Three estimators can be computed by column-wise transformations. Denote $\vec{u}$ any column of $\matr{R}^s$ resp. $\matr{T}^s$, and $\widetilde{\matr{R}}^s$ resp. $\widetilde{\matr{T}}^s$ the resulting matrices by applying one of the following functions to each column: 
\begin{itemize}
    \item $\phi_1(\vec{u}):=\tanh{(\vec{u})}$
    \item $\phi_2(\vec{u}):= rank{(\vec{u})}$, where $rank$ denotes the ranking operation of the entries of $\vec{u}$,  
    \item $\phi_3(\vec{u}):=\phi\bigg(\frac{(\phi_2(\vec{u})-\frac{1}{3})}{|\mathcal{E}|+\frac{1}{3}}\bigg)$ where $\phi$ is the normal cumulative distribution function.
\end{itemize}
Then in each case, an estimator of the correlation is given by $\matr{S} = \corr(\widetilde{\matr{R}}^s,\widetilde{\matr{T}}^s)$. A fourth correlation estimator can be obtained by a row-wise transformation: applying to each row $\vec{u}$ of $\matr{R}^s$ resp. $\matr{T}^s$ the function $\psi_4(\vec{u})= \frac{\vec{u}}{\norm{\vec{u}}}_2$ to obtain $\widetilde{\matr{R}}^s$ resp. $\widetilde{\matr{T}}^s$, the spatial sign estimator is then given as $\matr{S} = \frac{1}{|\mathcal{E}|} \widetilde{\matr{R}}^{s'}\widetilde{\matr{T}}^s$, see also \cite{durre2015spatial}.
Again, similar to the deterministic MCD, we adjust the singular values of $\matr{S}$. After computing the SVD of $\matr{S}$: $\matr{S} = \matr{U} \matr{\Sigma} \matr{V}',$ we apply the following steps.
\begin{itemize}
    \item Compute the projections  $ {\matr{R}}^u:= {\matr{R}}^s\matr{U}$ and ${\matr{T}}^u :={\matr{T}}^s\matr{V}$.
    \item Robustly estimate the scales of ${\matr{R}}^u$ and ${\matr{T}}^u$ with the Qn-scale estimator. Denote these estimates $\vec{\sigma}_{{\matr{R}}^u}$ and $\vec{\sigma}_{{\matr{T}}^u}$.
    \item Replace $\matr{S}$ by $ \matr{U} \diag{(\vec{\sigma}_{{\matr{R}}^u})} \diag{(\vec{\sigma}_{{\matr{T}}^u})} \matr{V}' $, where $\diag$ denotes a diagonal matrix of corresponding entries. 
\end{itemize}
Finally, to obtain a covariance estimate we transform $\matr{S}$ back by multiplying each column, respectively row, with the originally estimated scales. Plugging in $\matr{Z}_{\mathcal{E}}$, $\matr{X}_{\mathcal{E}}$ or $\matr{X}_{\mathcal{E}}-\matr{Z}_{\mathcal{E}}\pmb{\theta}$ for 
$\matr{R}$ or $\matr{T}$, computing covariance estimates for the latter and using (\ref{eq:edgewiseTheta}) and (\ref{eq:edgewiseCov}) leads to initial estimators of $\matr{\vec{\theta}}^{0}$ and ${\matr{\Sigma}}_V^{0}$.


\section{Simulation Study} \label{sec:sim}

We perform a simulation study to test the utility of the proposed method. To generate data from the model (\ref{def:matrixNormal}), we need to first select a covariance matrix $\matr{\Sigma}_{V}$ and a graph Laplacian matrix $\matr{L}$. We start by fixing a dimension $p \in \{3,10\}$ and a number of nodes $n \in \{ 50,100,200,300 \}$. We define the covariance matrix $\matr{\Sigma}_{V}$  by drawing independent entries of a matrix  $\matr{H}\in \mathbb{R}^{p\times p}$ from a standard Gaussian and computing the eigenvectors $\matr{U}$ of $\matr{H}'\matr{H}$. Sampling eigenvalues $\vec{\sigma} \in \mathbb{R}^p$ from a uniform distribution, $\sigma_k \sim \mathcal{U}[1,50]$, we obtain a covariance matrix by $\matr{\Sigma}_{V} = \matr{U} \diag(\vec{\sigma}) \matr{U}'$.
To obtain the Laplacian matrices $\matr{L}$, we  select three graphs obtained by simulating from three different types of graphs covering a wide range of models of connectivity: 
\begin{itemize}
    \item {Knn graph:} we simulate $n$-times a coordinate vector $(x_i,y_i) \in \mathbb{R}^2$, $i=1,\dots,n$, with $x_i,y_i \sim \mathcal{U}[0,1]$ and compute the five nearest neighbors for each node. The edge set $\mathcal{E}$ is then defined by connecting the closest five  neighbors, see \cite{marchette2005random}, using the R package \cite{cccd}.
    \item {Erdos-Renyi graph:} we generate $n$ nodes and connect at random any two nodes with a probability of 0.05. This Erdos-Renyi graph \cite{gilbert1959random} is obtained using the R package \cite{igraph}. The edge set $\mathcal{E}$ is then given by the graph structure.  
    \item {Scalefree graph:} we generate a graph from the Barabási-Abert model with parameters $(m0 = 1, m = 2)$, see \cite{barabasi1999emergence}. The edge set $\mathcal{E}$ is then given by the graph structure.  
\end{itemize}

Next we generate corresponding weights $w_{ij}\sim \mathcal{U}[0,1]$ for $(i,j) \in \mathcal{E}$ and denote the resulting matrix by $\matr{W}$. Then we set $\matr{L} = \diag({\matr{W}} \vec{1})-{\matr{W}} $.
To get a data matrix $\matr{X}$ following the distribution (\ref{eq:model}) with given $\matr{\Sigma}_{V}$ and $\matr{L}$, we do the following steps:
\begin{enumerate}
    \item Calculate the square root of $\matr{\Sigma}_V = \matr{\Sigma}_V ^{\frac{1}{2}} \matr{\Sigma}_V^{\frac{1}{2}}$ and the square root of the generalized inverse $ \matr{L}^+ = \matr{L}^{+/2} \matr{L} ^{+/2} $.
    \item Draw $\text{vecc} (\matr{Y}) \sim  \mathcal{N}_{np} (\matr{0}_{np}, \matr{I}_p \otimes \matr{I}_n)$.
    \item Draw independently entries $z_{im} \sim \mathcal{U}[-1,1]$, $i=1,\cdots,n$ and $m=1,\cdots,q$, with $q = 7$, to obtain a covariate matrix $\matr{Z} \in \mathbb{R}^{n \times q}$. Additionally, draw a coefficient matrix $\pmb{\theta} \in \mathbb{R}^{ q \times p} $ with i.i.d entries $\theta_{ml} \sim \mathcal{N}(0,1)$, with $m=1,\cdots,q$ and $l=1,\cdots,p$.
    \item Finally, a matrix $\matr{X}$ that follows the distribution $\text{vecc} (\matr{X}) \sim \mathcal{N}_{np} (\text{vecc}(\pmb{\mu}),\matr{L}^+  \otimes \matr{\Sigma}_V)$ is obtained by setting $\matr{X} =\pmb{\mu}  + \matr{L} ^{+/2}\matr{Y} \matr{\Sigma}_V ^{\frac{1}{2}}$ with $\pmb{\mu}:=\matr{Z} \pmb{\theta} $.
\end{enumerate}

Finally, we also corrupt the data by creating edgewise outliers. First, we fix a percentage of edges to be corrupted $\zeta \in  \{ 0.05,0.1,0.2,0.3 \} \times 100 \%$. 
We corrupt the data matrices $\matr{X}$ and $\matr{Z}$ in the following way. We denote by $\vec{r}$ the eigenvector associated to the largest eigenvalue of ${\matr{\Sigma}}_V$. We order the entries of $\matr{X} \vec{r} \in \mathbb{R}^n$ from lowest to highest and denote $\gamma$ its order function. Then we swap the rows $\vec{x}_{\gamma(1)},\vec{x}_{\gamma(2)},\dots,\vec{x}_{\gamma(k)}$ with the rows $\vec{x}_{\gamma(n)},\vec{x}_{\gamma(n-1)},\dots,\vec{x}_{\gamma(n-k+1)}$ for a $k$ such that at most $\zeta |\mathcal{E}|$  edges are affected. This is similar to the corruption setting suggested in \cite{harris2014multivariate}. Denote by ${\cal V}_{\text{corr}}$ the set of nodes that have been corrupted this way. 
Next, we also corrupt the covariates data matrix $\matr{Z}$. To do so we replace each $\vec{z}_i$ with $i \in  {\cal V}_{\text{corr}}$ by a multivariate point with entries sampled from $\mathcal{U}[-10,10]$. 
To compare the performance of the estimated parameters $(\hat{\matr{\vec{\theta}}},{\hat{\matr{\Sigma}}}_V)$  to the true ones $(\matr{\vec{\theta}},{\matr{\Sigma}}_V)$, we use three different error scores, the F-score (Fsc), the Kullback–Leibler divergence (KL) and the relative distance (RD). Define the set $\mathcal{E}_{outl}$ of edges that are edge outliers given the parameters $(\matr{\vec{\theta}},{\matr{\Sigma}}_V)$, i.e. 
$$\mathcal{E}_{outl} := \Big\{ (i,j) \in \mathcal{E}: \frac{(\vec{x^{\mu}}_i- \vec{x^{\mu}}_j)'\matr{\Sigma}_V^{-1} (\vec{x^{\mu}}_i- \vec{x^{\mu}}_j)}{ w_{ij}(l_{ii}+l_{jj}-2l_{ij})} > \chi^2(p,0.95) \Big\},$$
and in the same manner define $\widehat{\mathcal{E}}_{outl}$ for estimated $(\widehat{\matr{\vec{\theta}}},{\widehat{\matr{\Sigma}}}_V)$. Then the scores Fsc, KL and RD are defined as:
\begin{itemize}
    \item $ \text{Fsc}:= 2 \frac{ \text{Pr} \cdot \text{Rec}}{\text{Pr} + \text{Rec}}$, where Pr is the precision $\text{Pr}:=\frac{|\mathcal{E}_{outl} \cap \widehat{\mathcal{E}}_{outl}|}{|\mathcal{E}_{outl}|} $ and Rec the recall $\text{Rec}:=\frac{|\mathcal{E}_{outl} \cap \widehat{\mathcal{E}}_{outl}|}{|\widehat{\mathcal{E}}_{outl}|}$.
\item 
$\text{KL}(\hat{\matr{\vec{\theta}}},\matr{\vec{\theta}},\hat{\matr{\Sigma}}_V,{\matr{\Sigma}}_V):=\frac{1}{2} \bigg( \Tr\big( {\matr{\Sigma}}_V^{-1} \hat{\matr{\Sigma}}_V \big) -p + \log\bigg( \frac{|{\hat{\matr{\Sigma}}}_V|}{|{{\matr{\Sigma}}}_V|} \bigg) + \frac{1}{q}\Tr{\big( \big( \matr{Z}\matr{\vec{\theta}} - \matr{Z}\hat{\matr{\vec{\theta}}} \big){\matr{\Sigma}}_V^{-1} \big( \matr{Z}\matr{\vec{\theta}} - \matr{Z}\hat{\matr{\vec{\theta}}} \big)'}\big) \bigg)$
\item 
$\text{RD}(\hat{\matr{\vec{\theta}}},\matr{\vec{\theta}}) := \frac{\norm{\hat{\matr{\vec{\theta}}} - {\matr{\vec{\theta}}}}_F}{\norm{\matr{\vec{\theta}}}_F}$.
\end{itemize}

We compare the error scores Fsc, KL and RD for the  method edgemcd proposed in subsection \ref{subsec:edgemcd} to the  deterministic MCD method \cite{hubert2012deterministic} on $\matr{X}_{\mathcal{E}}-\matr{Z}_{\mathcal{E}} \hat{\pmb{\theta}}$, where we robustly estimate $\hat{\pmb{\theta}}$ beforehand by LTS regression \cite{rousseeuw1984least}  with the R package \cite{robustbase}, and to the standard std estimators std (\ref{eq:estMu})-(\ref{eq:estCovV}).
Figure \ref{fig::kl}, \ref{fig::theta} and \ref{fig::fs} display the performance measures KL, RD, and Fsc for these three methods depending on the graph type, the percentage of corruption, for a growing number of nodes N and different dimensions $p$. We can see in Figure \ref{fig::kl} that  for all three graph structures and for no corruption (0\%) the proposed edgemcd method does not perform considerably worse than the standard std estimates. For growing corruption rate and growing number of nodes, the Kullback-Leibler divergence grows considerably for the mcd and std methods, whereas our edgemcd method still improves with growing $N$, i.e. the estimates for $\matr{\Sigma}_V$ and $\pmb{\theta}$ improve with a growing number of nodes. This can also be seen in Figure \ref{fig::theta} which displays the relative distance for the estimated coefficients $\vec{\theta}$. 
The standard std estimates continue the give bad results. Even though the mcd estimates improve with a growing number of nodes, our edgemcd method shows smaller errors and a quicker improvement. Finally, Figure \ref{fig::fs} shows the performance in terms of F-score. Clearly, the F-score for the standard std estimates is becoming worse and worse with a growing percentage of corruption even though with a growing number of nodes it slightly improves. The mcd method performs considerably well in comparison to std for the knn and scalefree graphs structure with a growing number of nodes but becomes quickly worse with $N$ for Erdos-Renyi graphs. Our method outperforms std and mcd in these settings. One might wonder why the F-scores become better with the growing percentage of outliers. The reason for this has to do with the outlier generating process as described at the beginning of this section. As we allow a higher percentage of edges to be outliers we also allow for a higher percentage of corrupted nodes. If both nodes of an edge $(i,j)$ are corrupted, then this has an even  larger effect on $\Delta_{ij}$ and makes these edges easier to detect if the estimates for $\matr{\Sigma}_V$ and $\pmb{\theta}$ are reasonably good.

\begin{figure}[h]
    \centering
    \includegraphics[scale = 0.28]{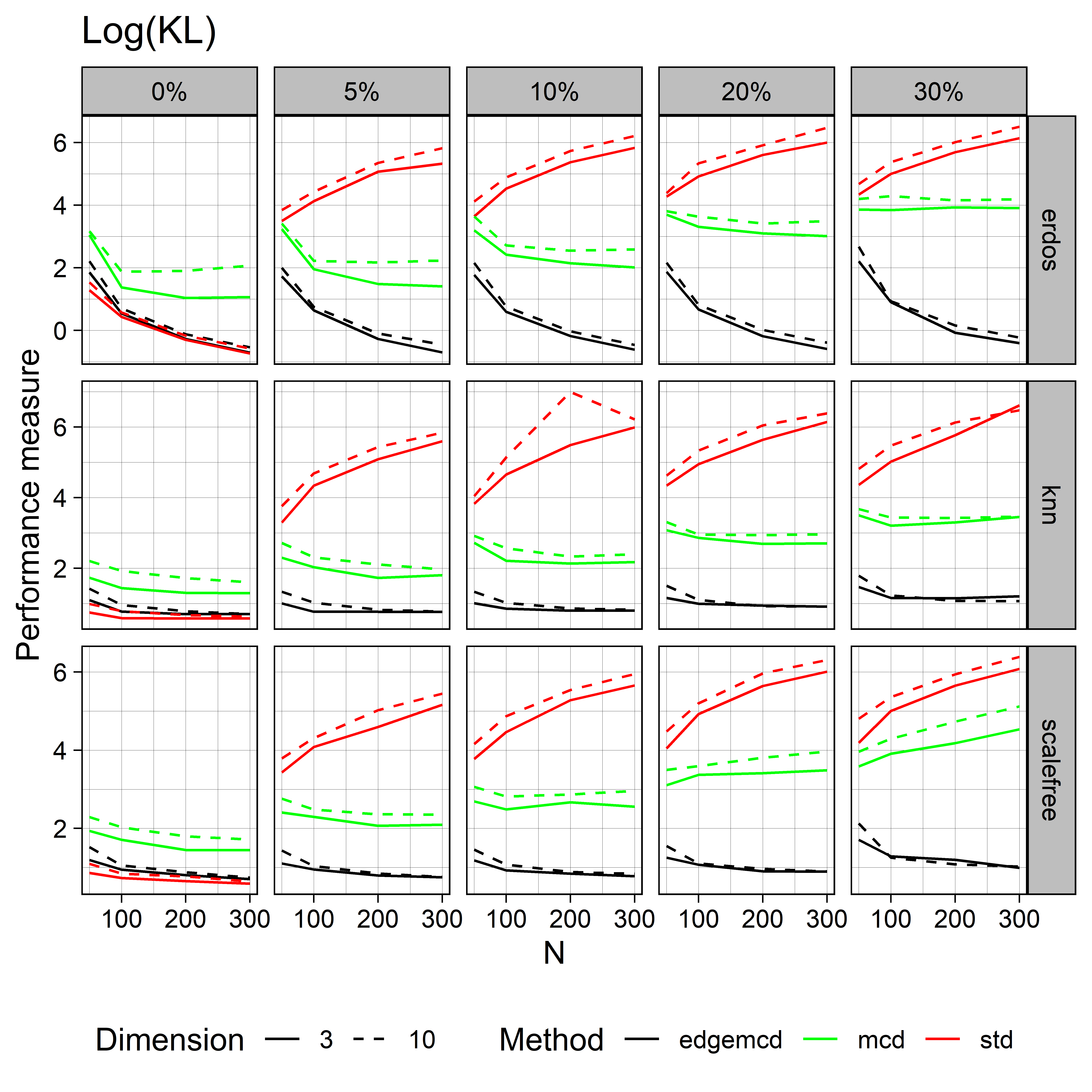}
    \caption{Log transformed Kullback-Leibler divergence KL (performance measure) versus the number of nodes $N$, comparing the edgemcd method proposed in this paper in black with the standard std estimates in red and the mcd method in green, see text, for two different dimensions, solid ($p=3$) and dashed ($p=10$) lines, with varying corruption level in each column and different types of graph in each row.}
    \label{fig::kl}
    \centering
    \includegraphics[scale = 0.28]{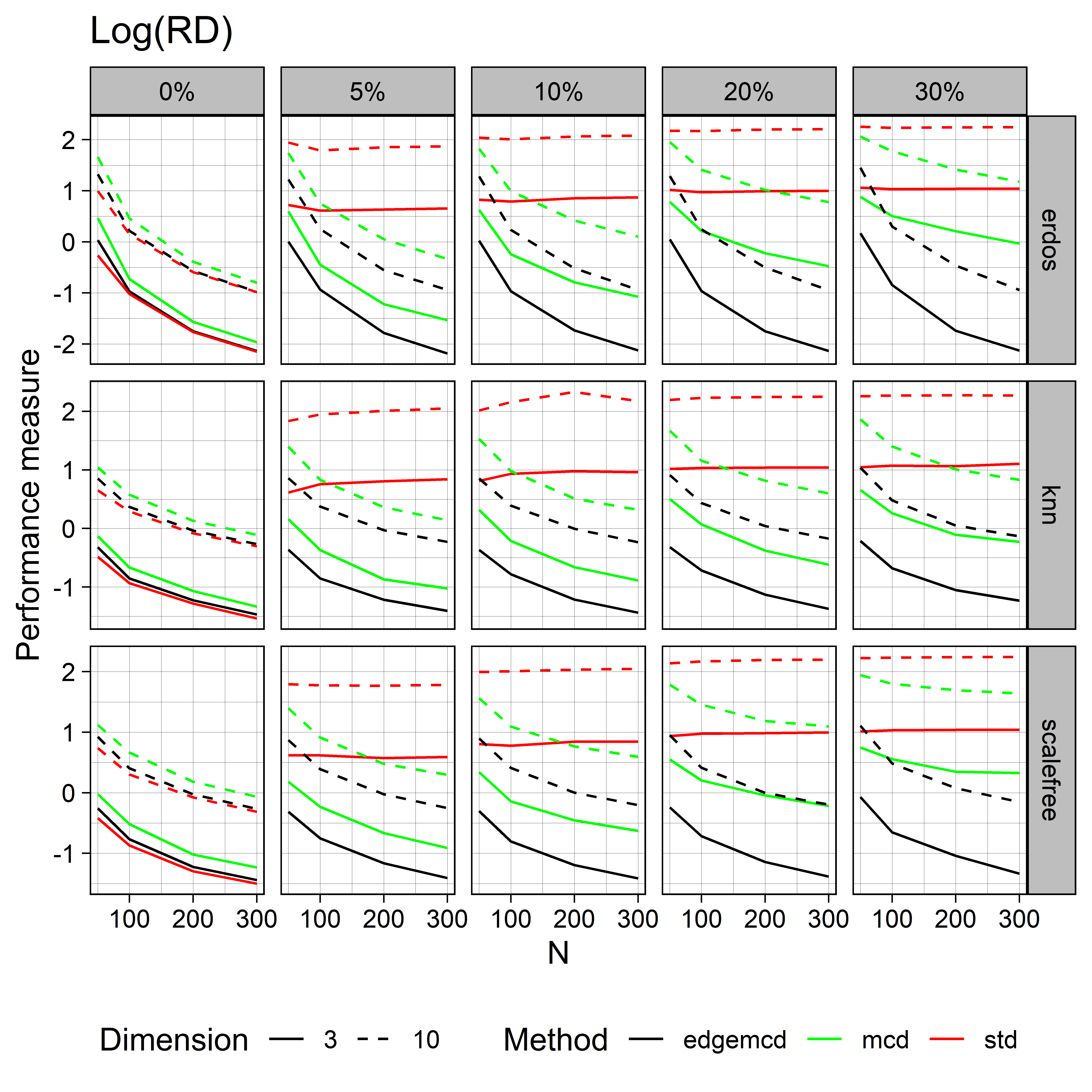}
    \caption{Log transformed relative distance RD (performance measure) versus the number of nodes $N$, comparing the edgemcd method proposed in this paper in black with the standard std estimates in red and the mcd method in green, see text, for two different dimensions, solid  ($p=3$) and dashed ($p=10$) lines, with varying corruption level in each column and different types of graph in each row.}
    \label{fig::theta}
\end{figure}

\begin{figure}[h]
    \centering
    \includegraphics[scale = 0.3]{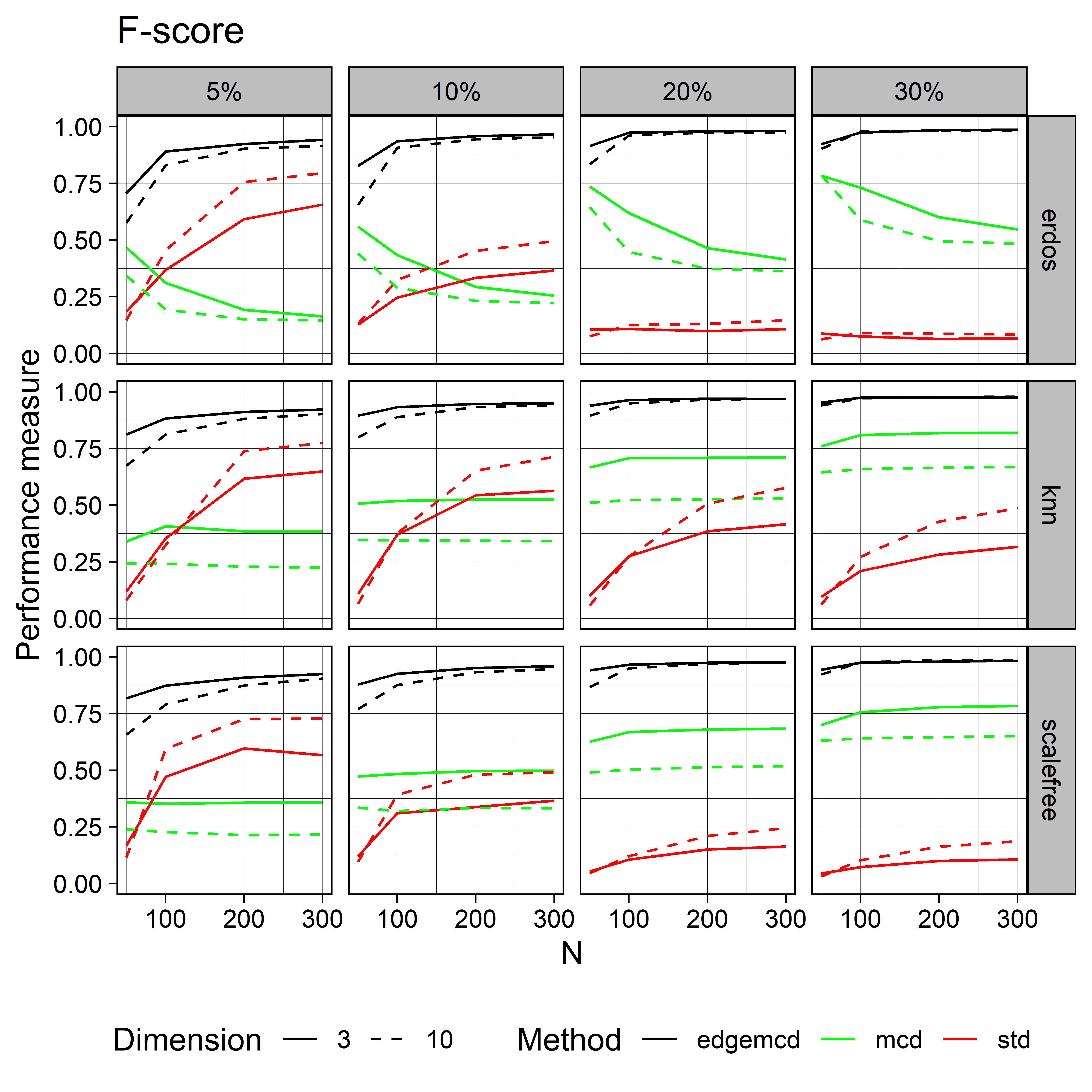}
    \caption{F-score (performance measure) versus the number of nodes N, comparing the method proposed in this paper (edgemcd) in black with the standard estimates (std) in red and another robust estimator (mcd) in green, see text, for different dimensions, solid and dashed lines, with varying corruption level in each column and different types of graph in each row.}
    \label{fig::fs}
\end{figure}


\section{Electoral Data}

We use the model and the edgewise outlier detection method as described in subsection \ref{subsec:edgemcd} to analyze an election  dataset publicly available at \url{https://www.data.gouv.fr/fr/datasets/elections-departementales-2015-resultats-par-bureaux-de-vote/}. The data contains vote shares   (in percent) for three groups of political parties  $\vec{x}_i \in \mathbb{R}^3$ (left parties, right parties, others)  for 95 French departments for the 2015 departmental elections. The covariates $\vec{z}_i \in \mathbb{R}^{19}$ are the population share  in age groups  (between 18-39 (ag\_1839), 40-64 (ag\_4064), or above 65 (age\_65)), 
the population share in  employment categories (agriculture and fisheries (AZ), manufacturing
industry, mining industry, and others (BE), construction (FZ), business, transport
and services (GU) and public administration, teaching and human health (OQ)), the proportion of foreigners (foreign), the proportion of income taxpayers (incm\_rt), the proportion of asset owners (ownr\_rt), the unemployment rate (unmp\_rt), the mean annual employment growth (emply\_v), and the number of people with different education levels (secondary (N\_CAPBE), at least secondary and most high school (bac) and  a university degree (dplm\_sp)).

As each datapoint $\vec{x}_i $ at a location consists of the percentage of voters for each of the three categories, it would be inappropriate to deal with this data in a Euclidean way. Similarly, some covariate groups such as voter age class, employment type, and education level are better interpreted in terms of percentages than absolute numbers. In fact \cite{nguyen2022analyzing} treat this data set as compositional. As some of the readers might be unfamiliar with compositional data we introduce the main concepts quickly. 

\subsection{Compositional Data}

Compositional data consists of strictly multivariate positive data and is easiest thought of as being restricted by $\sum_{k = 1}^p x_k =1$. The set of such vectors is called the $p$-part simplex 
\begin{align*}
    \mathcal{S}^p:=\bigg\{ (x_1,\ldots,x_p)^\top \in \mathbb{R}^p_{+} : \sum_{k = 1}^p x_k =1 \bigg\} \subset \mathbb{R}^p_+ \ 
\end{align*}
and it is equipped with an addition and a multiplication operation, also called perturbation and powering in the compositional literature defined as    
$$\vec{x} \oplus_{\mathcal{A}} \vec{y} := \frac{1}{ \sum_{i=1}^p x_i y_i}(x_1 y_1,\ldots,x_p y_p)^\top$$ 
and 
$$\alpha \odot_{\mathcal{A}} \vec{x} :=  \frac{1}{ \sum_{i=1}^p x_i^\alpha} (x_1^\alpha,\ldots,x_p^\alpha)^\top,$$
for any $\vec{x} \in \mathcal{S}^p $ and $\vec{y} \in \mathcal{S}^p$, and $\alpha \in \mathbb{R}$, see \cite{pawlowsky2015modeling}. In addition to the perturbation and powering operation, an inner product can be defined
 \begin{align}
        {\langle \vec{x},\vec{y} \rangle}_{\mathcal{A}} := \frac{1}{2p} \sum^p_{k,l = 1} \ln\bigg(\frac{x_k}{x_l}\bigg)\ln\bigg(\frac{y_k}{y_l}\bigg) , \label{aitchisonInner}
\end{align}
turning $(\mathcal{S}^p, {\langle \cdot , \cdot \rangle}_{\mathcal{A}} , \oplus_{\mathcal{A}}, \odot_{\mathcal{A}})$ into a finite $p-1$ dimensional Hilbert space with norm $\norm{\vec{x}}_{\mathcal{A}} := \sqrt{{\langle \vec{x},\vec{x} \rangle}_{\mathcal{A}}}$, see \cite{pawlowsky2015modeling}. Two transformations are of central importance in compositional data analysis. The first one is the
clr (centered log-ratio)-map given as
\begin{align} \label{standardCLR}
  \clr : \mathcal{S}^p \rightarrow \mathbb{R}^p, \quad    \clr{(\vec{x})} := \Bigg(\ln\Bigg(\frac{x_1}{\sqrt[p]{\prod^p_{k=1} x_k}}\Bigg),\ldots ,\ln\Bigg(\frac{x_p}{\sqrt[p]{\prod^p_{k=1} x_k}}\Bigg)\Bigg)^\top ,
\end{align}
which along with being distance preserving (see  \cite{pawlowsky2015modeling}) also fulfills $\clr(\vec{x} \oplus_{\mathcal{A}} \vec{y}) = \clr(\vec{x}) + \clr(\vec{y})$, $\clr(\alpha \odot_{\mathcal{A}} \vec{x}) = \alpha \clr(\vec{x}) $ and ${\langle \vec{x},\vec{y} \rangle}_{\mathcal{A}} = {\langle \clr(\vec{x}),\clr(\vec{y}) \rangle}_{2}$. However, the clr-map is not bijective onto $\mathbb{R}^p$ and therefore a more useful map, called the ilr (isometric log-ratio)-map, see \cite{egozcue2003isometric}, is given by
\begin{align} \label{ilrCoda}
   \ilr_{\matr{V}} : \mathcal{S}^p \rightarrow \mathbb{R}^{p-1}, \quad  \ilr_\matr{V}(\vec{x}) := \matr{V}'\clr(\vec{x}) \ ,
\end{align}
where  $\matr{V} \in \mathbb{R}^{p \times (p-1)}$ is a matrix with orthogonal columns spanning the $p-1$ dimensional subspace $\{ \vec{a} \in \mathbb{R}^p:\sum_{j=1}^p a_j = 0 \} \subset \mathbb{R}^D$. The ilr-map is an isometric bijective map onto $\mathbb{R}^{p-1}$ and its foremost advantage is to transform compositional data to the standard Euclidean geometry where standard methods can be used. To transform a point back to the simplex we can simply use the following relation $\matr{V} \ilr_\matr{V}(\vec{x}) = \clr(\vec{x}) $.

\subsection{Electoral Data}

Following the previous subsection we apply the clr-transformation, and the ilr-transformation, to each row of $\matr{X}$ as well as to the covariates population age distribution, employment distribution, and education level. The other covariates remain unchanged. Denote the resulting data matrices by $\matr{X}^{\clr}$, $\matr{Z}^{\clr}$, $\matr{X}^{\ilr}$ and $\matr{Z}^{\ilr}$. By properties of the ilr-transform (equation \eqref{ilrCoda}), we can write $\vec{x}^{\ilr}_i = \matr{V}_{\matr{X}}' \vec{x}^{\clr}_i$ and $ \vec{z}_i^{\ilr} = \matr{V}_{\matr{Z}}'\vec{z}^{\clr}_i$ for some matrices $\matr{V}_{\matr{X}} $ and $\matr{V}_{\matr{Z}}$.
We apply the algorithm discussed in subsection \ref{subsec:edgemcd} to $\matr{X}^{\ilr}$ and $\matr{Z}^{\ilr}$ to find robust estimators $(\hat{\vec{\theta}}^{\ilr},\hat{\matr{\Sigma}}_V^{\ilr})$. We can rewrite $\vec{x}^{\ilr,\hat{\vec{\mu}}^{\ilr}}_i=\vec{x}^{\ilr}_i-\vec{z}^{\ilr}_i \hat{\vec{\theta}}^{\ilr}$ as 
\begin{align*}
\vec{x}^{\ilr}_i-\vec{z}^{\ilr}_i \hat{\vec{\theta}}^{\ilr} = \matr{V}_{\matr{X}}' \vec{x}^{\clr}_i - (\hat{\vec{\theta}}^{\ilr})' \matr{V}_{\matr{Z}}'\vec{z}^{\clr}_i =   \matr{V}_{\matr{X}}' \vec{x}^{\clr}_i - (\matr{V}_{\matr{Z}}\hat{\vec{\theta}}^{\ilr} )'\vec{z}^{\clr}_i  
&= \matr{V}_{\matr{X}}'( \vec{x}^{\clr}_i - \matr{V}_{\matr{X}}(\matr{V}_{\matr{Z}}\hat{\vec{\theta}}^{\ilr} )'\vec{z}^{\clr}_i) \\
&= \matr{V}_{\matr{X}}'( \vec{x}^{\clr}_i - (\hat{\pmb{\theta}}^{\clr})'\vec{z}^{\clr}_i) \\
& = \matr{V}_{\matr{X}}' \vec{x}^{\clr,\hat{\vec{\mu}}^{\clr}}_i
\end{align*}
where we set $\hat{\pmb{\theta}}^{\clr} :=\matr{V}_{\matr{Z}}\hat{\vec{\theta}}^{\ilr} \matr{V}_{\matr{X}}'$ and $\hat{\pmb{\mu}}^{\clr}:= \matr{Z}^{\clr} \hat{\pmb{\theta}}^{\clr} $.  Consequently, by defining $\matr{\Sigma}_{V}^{\clr} 
:= \matr{V}_{\matr{X}}\hat{\matr{\Sigma}}_V^{\ilr} \matr{V}_{\matr{X}}'$, we can also rewrite 
\begin{align*}
    {\hat{\Delta}_{ij}}^{\ilr} &=(\vec{x}^{\ilr,\hat{\vec{\mu}}^{\ilr}}_i-  \vec{x}^{\ilr,\hat{\vec{\mu}}^{\ilr}}_j)'(\hat{\matr{\Sigma}}_V^{\ilr})^{-1} (\vec{x}^{\ilr,\hat{\vec{\mu}}^{\ilr}}_i- \vec{x}^{\ilr,\hat{\vec{\mu}}^{\ilr}}_j)w_{ij} \\
    & =  (\vec{x}^{\clr,\hat{\vec{\mu}}^{\clr}}_i-\vec{x}^{\clr,\hat{\vec{\mu}}^{\clr}}_j)'(\matr{V}_{\matr{X}}(\hat{\matr{\Sigma}}_V^{\ilr})^{-1} \matr{V}_{\matr{X}}')(\vec{x}^{\clr,\hat{\vec{\mu}}^{\clr}}_i-\vec{x}^{\clr,\hat{\vec{\mu}}^{\clr}}_j)w_{ij} \\
    & = (\vec{x}^{\clr,\hat{\vec{\mu}}^{\clr}}_i-\vec{x}^{\clr,\hat{\vec{\mu}}^{\clr}}_j)'(\matr{\Sigma}_{V}^{\clr})^{+}(\vec{x}^{\clr,\hat{\vec{\mu}}^{\clr}}_i-\vec{x}^{\clr,\hat{\vec{\mu}}^{\clr}}_j)w_{ij}
\end{align*}
where we use for the last equation that the generalized inverse of $\matr{V}_{\matr{X}}\hat{\matr{\Sigma}}_V^{\ilr} \matr{V}_{\matr{X}}'$ is given by $\matr{V}_{\matr{X}}(\hat{\matr{\Sigma}}_V^{\ilr})^{-1} \matr{V}_{\matr{X}}'$.
We see that ${\hat{\Delta}_{ij}}^{\ilr}$ does not depend on the contrast matrix $\matr{V}_{\matr{X}}$ and therefore can also be denoted by  ${\hat{\Delta}_{ij}}^{\clr}.$ We then see that in clr-coordinates an edge is an outlier if ${\hat{\Delta}_{ij}}^{\clr} > \chi^2(p-1,0.995)$. Additionally, $\matr{\Sigma}_{V}^{\clr}$ can be interpreted as the global covariance between the different voter shares and $\hat{\pmb{\theta}}^{\clr}$ as the coefficients corresponding to a covariance variable driving the voter share results.

Figure \ref{fig::franceNetwork} shows the network structure that we consider. The departments that share a border are connected. 
Denote $\matr{W} \in \mathbb{R}^{93 \times 93}$ a weight matrix that has entries ${w}_{ij}$ equal to one if there is an edge between department $i$ and department $j$ and zero otherwise. We then set the Laplacian matrix to $\matr{L} = \matr{D} - \matr{W}$, see section \ref{sec:graphsig}.

\begin{figure}[h]
\centering
\includegraphics[scale = 0.9]{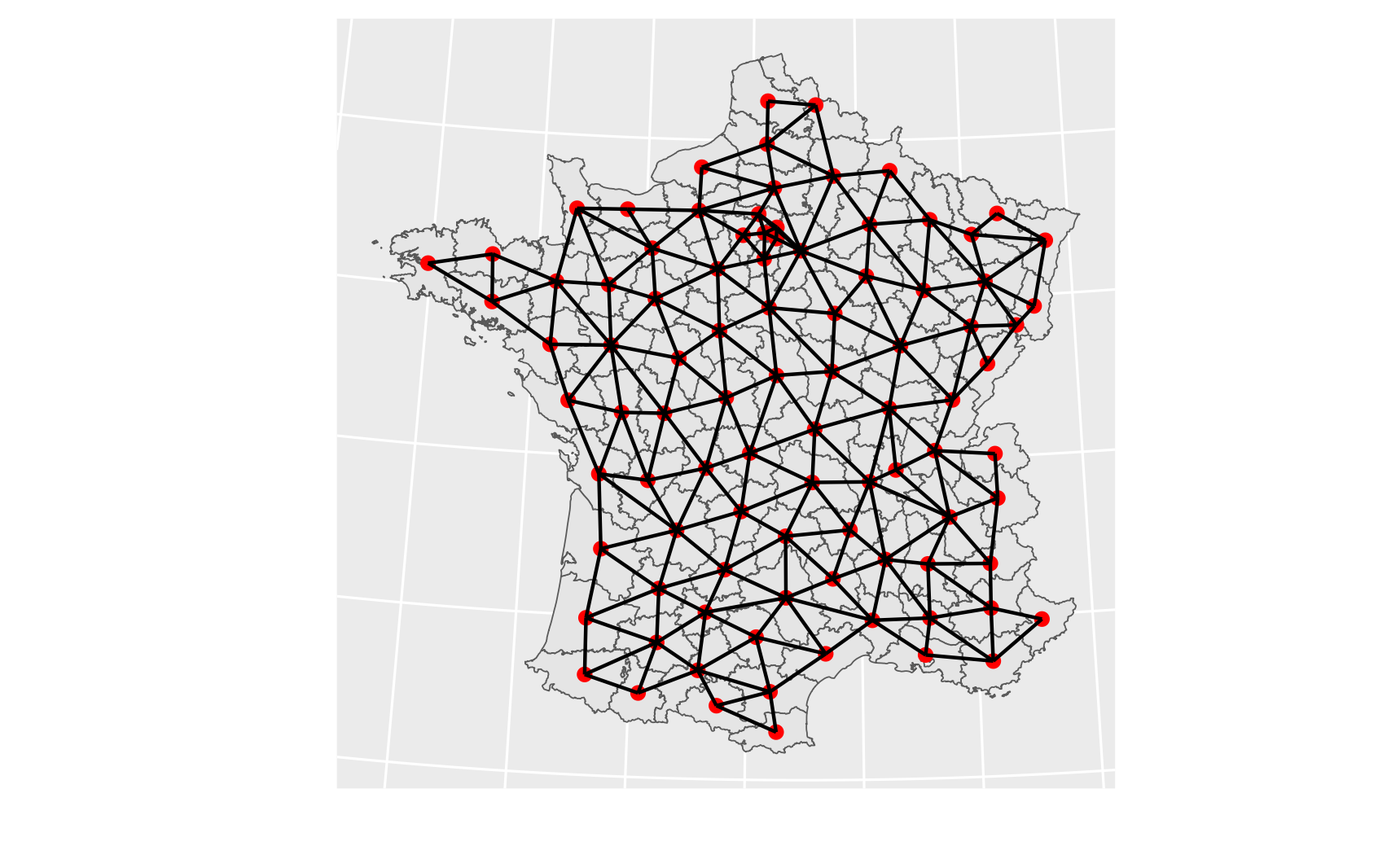}
\caption{ Network connecting adjacent departments of France. }
\label{fig::franceNetwork}
\end{figure}

 Table \ref{table::coeffs} displays  the estimated coefficients $\hat{\pmb{\theta}}^{\clr} $ in clr-coordinates normalized by multiplying by the standard deviation of each covariate. Voters with lower education levels (N\_CAPBE) tend to vote less for left parties and more for right and others. On the contrary, voters with a university degree (dplm\_sp) vote similarly for the left and right but much less for others. An increase of the unemployment rate (unmp\_rt) leads to more votes for the parties in the category others than for the right parties. Similarly, an increase in the rate of foreigners leads to more votes for parties contained in others. Increasing the percentage of employment in the construction sector (FZ) leads to more votes for the right parties.

\begin{table}[ht]
\centering
\begin{tabular}{rrrrrrrrrrr}
  \hline
 & ag\_1839 & ag\_4064 & age\_65 & N\_CAPBE & bac & dplm\_sp & AZ & BE & FZ & GU \\ 
  \hline
left & -0.03 & -0.00 & 0.05 & -0.24 & 0.05 & 0.09 & -0.05 & 0.01 & -0.11 & 0.02 \\ 
  right & -0.09 & 0.02 & 0.05 & 0.11 & -0.05 & 0.05 & 0.04 & -0.05 & 0.10 & -0.04 \\ 
  others & 0.13 & -0.01 & -0.09 & 0.13 & -0.00 & -0.13 & 0.01 & 0.04 & 0.01 & 0.02 \\ 
   \hline
\end{tabular}
\begin{tabular}{rrrrrrr}
  \hline
 & OQ & unmp\_rt & emply\_v & ownr\_rt & incm\_rt & foreign \\ 
  \hline
left & 0.09 & 0.01 & -0.03 & 0.09 & -0.13 & -0.05 \\ 
  right & -0.04 & -0.22 & 0.01 & -0.19 & 0.04 & -0.07 \\ 
  others & -0.05 & 0.21 & 0.02 & 0.10 & 0.09 & 0.12 \\ 
   \hline
\end{tabular}
\caption{Standardized estimated coefficients}
\label{table::coeffs}
\end{table}

Figure \ref{fig::mhdsplot1} shows $\sqrt{\frac{{\hat{\Delta}_{ij}}^{\clr}}{ w_{ij}(l_{ii}+l_{jj}-2l_{ij})}}$ on the y-axis versus the edges $(i,j) \in \mathcal{E}$ on the x-axis. Everything above the horizontal line at $\sqrt{\chi^2(2,0.975)}$ can be considered as an edge outlier. This map is helpful for checking for edge outliers and their magnitude. Clearly, we can see that there are couples of adjacent departments that display very different behavior. However, as the indexing of the edges on the x-axis is arbitrary this plot is  helpful in detecting these departments but needs to be completed by a corresponding map. Figure \ref{fig::mhdsFranceParis} shows these outlying edges for the whole of France, where the darker an edge is the more it is outlying , i.e.  $\sqrt{\frac{{\hat{\Delta}_{ij}}^{\clr}}{ w_{ij}(l_{ii}+l_{jj}-2l_{ij})}}$ is comparatively bigger. 
When looking at the whole country, we can see that certain departments behave very differently from their neighbors. We will only look at the biggest outliers. In the south-west of France the departments Lot, Corrèze, and Cantal show high edge outliers. A possible explanation can be found by looking at the values of $\vec{x}_i$ and $\vec{z}_i$ for this region. 
We look at the log-ratios of $\vec{x}_i$ in that region, as is common in Compositional Data, see top row of Figure \ref{fig::lrall}. We can see that voters in Lot voted primarily for the left party rather than the right or others. Cantal was primarily dominated by the right party. Votes in Corr\`eze were almost equally split between the left and right parties. The map for the log-ratio between left and others displays little spatial change and we can make the likely conclusion that the edge outliers in this area were caused by the domination of the left in Lot and the right in Cantal. 
No atypical values of $\vec{z}_i$ seem to drive these outliers except possibly that the agriculture and fisheries (AZ) sector takes a much bigger role in Cantal than in the other departments. 
The higher the (AZ) sector is the lower the votes for the left are, as can be seen in Table \ref{table::coeffs}, which might explain the domination of the right in Cantal. Similarly, Figure \ref{fig::mhdsFranceParis} shows an outlying edge between the departments of Ariège and Pyrénées-Orientales. Again looking at the log-ratio maps, middle row of Figure \ref{fig::lrall}, we can make the likely conclusion that the outliers are caused by the high percentage of votes for the left in Ari\`ege and the almost equal split of votes between the three voter categories in Pyrénées-Orientales. This is rather atypical as the surrounding departments leaned rather to the left.  
Again, no atypical values of $\vec{z}_i$ seem to be the reason except for the sector industry, mining industry, and others (BE) taking a bigger part in Ari\`ege. Finally, zooming into I\^le de France, we see that the biggest edge outliers are found for the departments of \^Ile-de-France, see the right plot of Figure \ref{fig::mhdsFranceParis}. The edgewise Mahalanobis distance $\sqrt{\frac{{\hat{\Delta}_{ij}}^{\clr}}{ w_{ij}(l_{ii}+l_{jj}-2l_{ij})}}$ between the department Seine-Saint-Denis and respectively Hauts-de-Seine, Seine-et-Marne, Val-de-Marne and Val d'Oise (in order of magnitude)  are especially high. Again Figure \ref{fig::lrall} helps us gain insight into the reason for these outliers. Seine-Saint-Denis was heavily dominated by the left party whereas its neighbor Hauts-De-Seine was mainly dominated by the right party. Seine-Et-Marne and Val-D'Oise votes were almost equally split which might explain the outliers between the latter two and Seine-Saint-Denis a heavily left department. There are some atypical values of $\vec{z}_i$ that might be responsible for these edge outliers. Mainly the department of Seine-Saint-Denis has a comparatively high number of inhabitants above 65 (age\_65) and also a high number of voters with only secondary education (N\_CAPBE). 
Typically, the latter would be associated with fewer voters for the left, see Table \ref{table::coeffs}, contradicting the high percentage of voters for the left in this department. Also Hauts-De-Seine has a very low part of voters with only secondary education (N\_CAPBE), in fact, the lowest for the whole of France.
No node outliers were detected for this data set.

\begin{figure}[h]
\centering
\includegraphics[scale = 0.75]{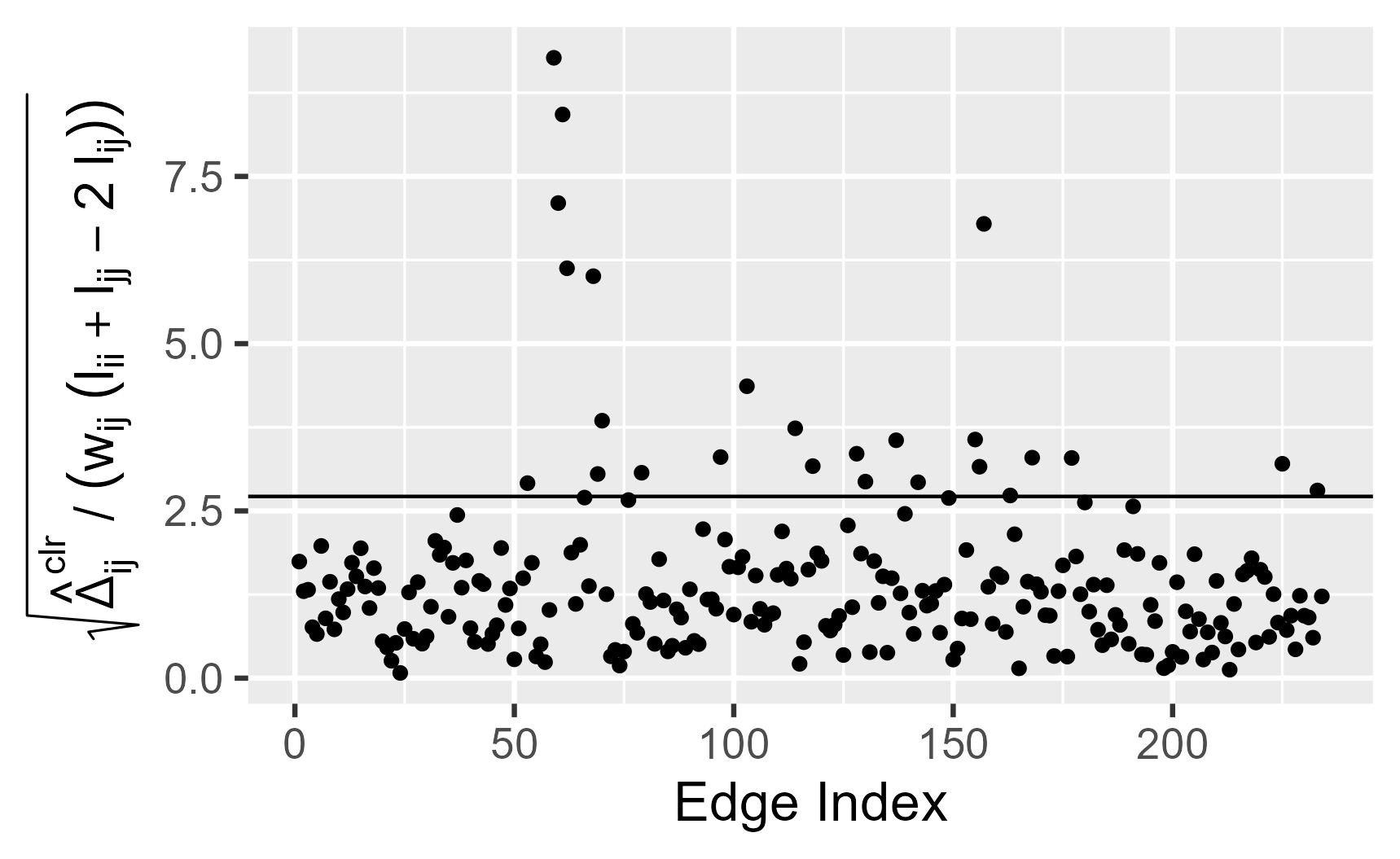}
\caption{Plot of the square root of $\frac{{\hat{\Delta}_{ij}}^{\clr}}{ w_{ij}(l_{ii}+l_{jj}-2l_{ij})}$ on the y-axis versus the edges $(i,j) \in \mathcal{E}$ on the x-axis. The horizontal line is at $\sqrt{\chi^2(2,0.975)}$. Potential edge outliers are above the line.}
\label{fig::mhdsplot1}
\end{figure}

\begin{figure}[h]
\centering
\includegraphics[width=1\textwidth]{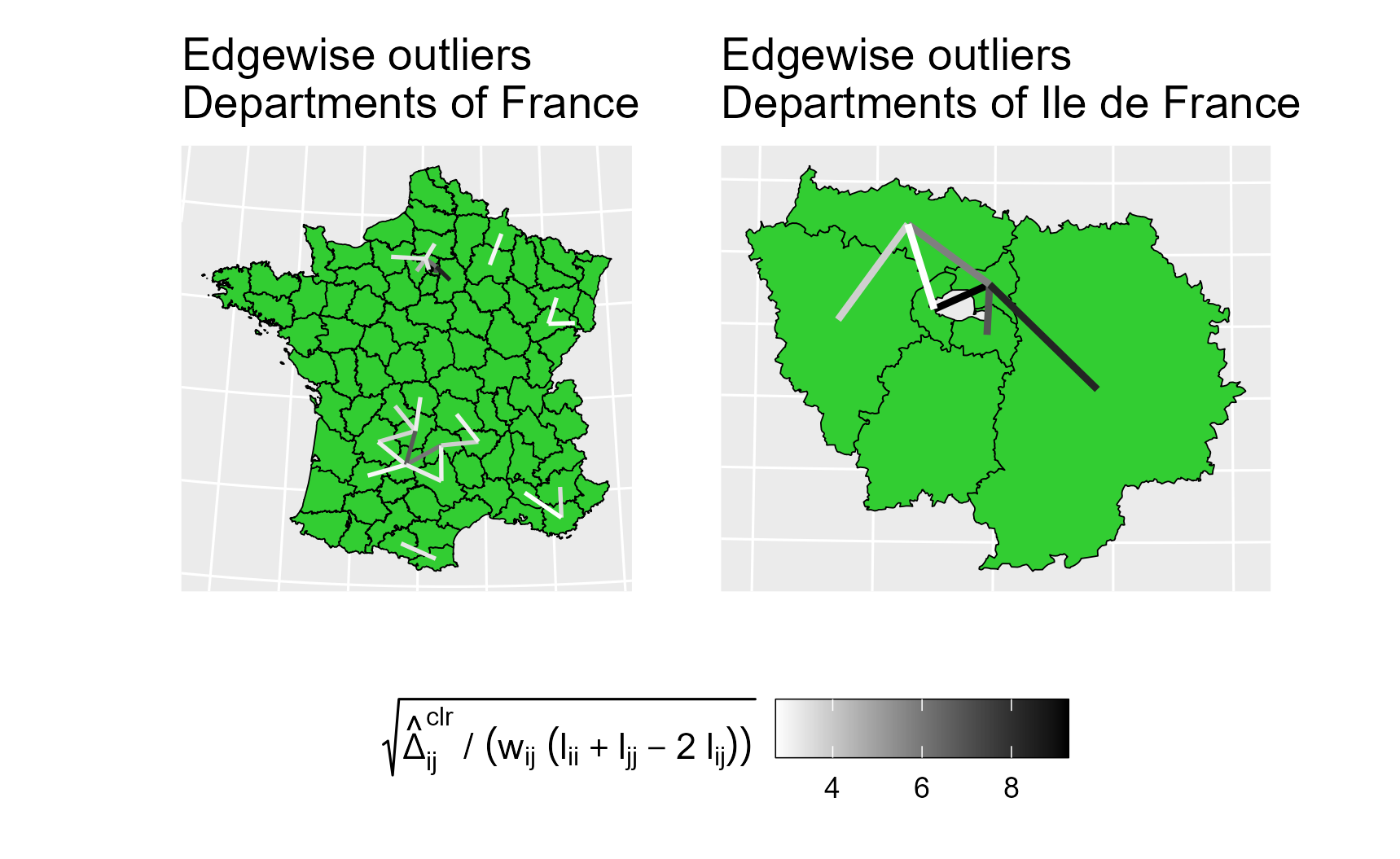}
\caption{On the left a map of mainland France with its departments with the detected edgewise outliers from white to black depending on their strength of outlyingness $\frac{\Delta_{ij}}{ w_{ij}(l_{ii}+l_{jj}-2l_{ij})}$. On the right a zoom into Paris with central Paris missing due to NAs.}
\label{fig::mhdsFranceParis}
\end{figure}

\begin{figure}[h]
  \centering
  \includegraphics[width=1\textwidth]{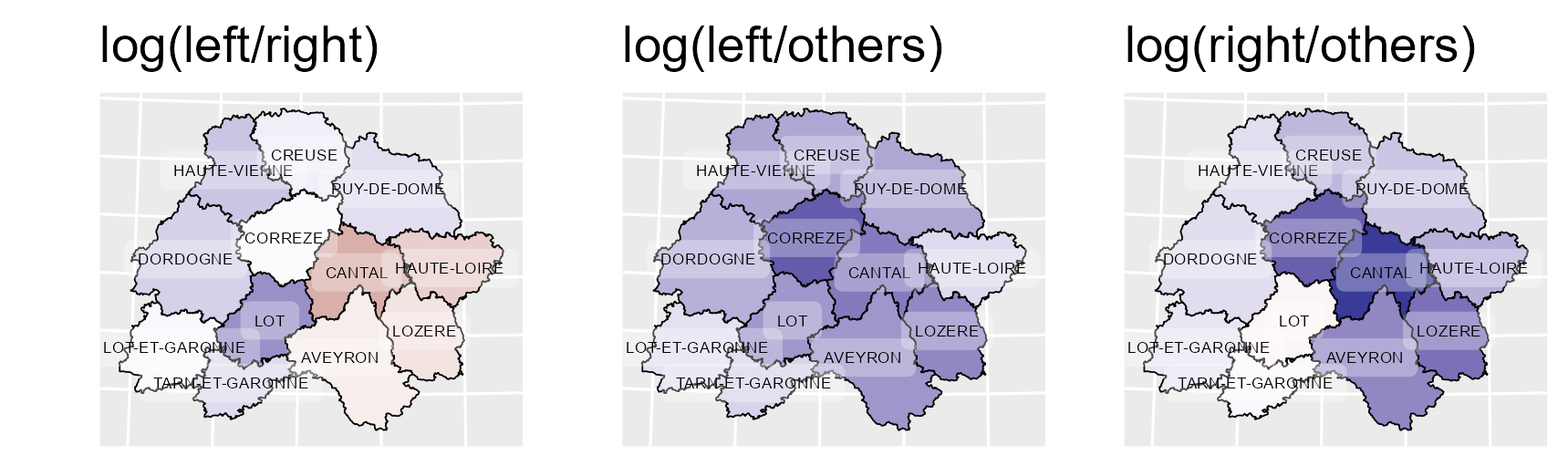}
  \vspace{-10pt}  
  \includegraphics[width=1\textwidth]{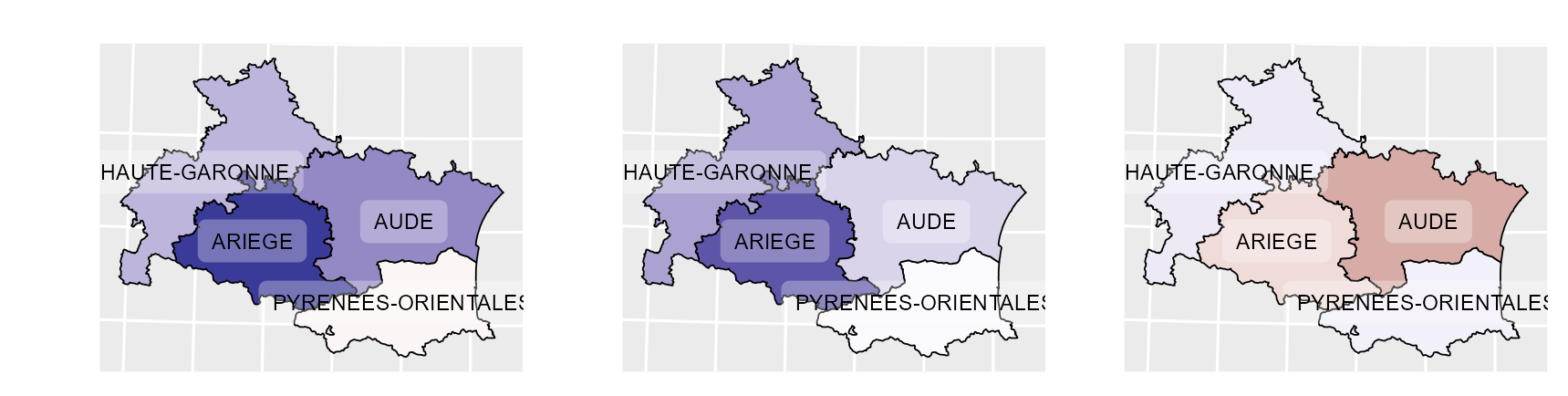}
  \vspace{-10pt}  
  \includegraphics[width=1\textwidth]{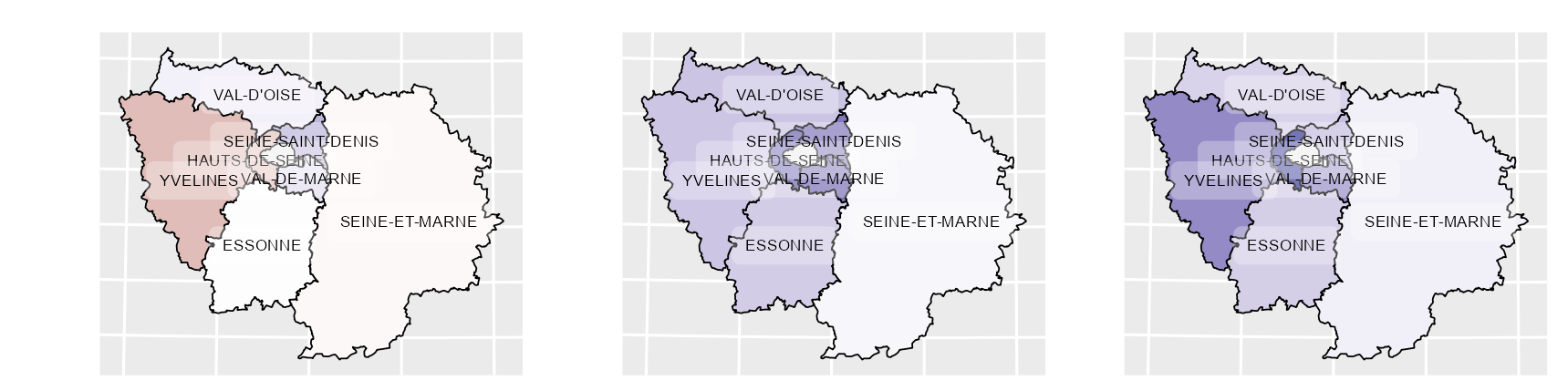}
  \caption{Log-ratio maps of departments with high edge outliers including their neighbors for better comparison. Red is negative and blue positive. The darker a color the higher the log-ratio.}
  \label{fig::lrall}
\end{figure}

To visually check the validity of the proposed model, we can also look at the standardized residuals $\matr{L}^{\frac{1}{2}} (\matr{X}^{\clr} - \matr{Z}^{\clr}\hat{\pmb{\theta}}^{\clr} ) (\hat{\matr{\Sigma}}_{V}^{\clr})^{-\frac{1}{2}}$ for each department and each response (left, right, other). If the model (\ref{eq:model}) holds for $\matr{X}^{\clr}$ and $\matr{Z}^{\clr}$ then we would expect the standardized residuals to behave like white noise up to a rotation. Even though our data does not necessarily follow (\ref{eq:model}), due for example to outliers, it is still reasonable to look at the residuals for quick model checking. Figure \ref{fig::residuals3} shows almost no spatial patterns (the left plots) except for pairs of departments that have been detected as edgewise outliers. On the right side, the plots which show the residuals versus the node index, also display little unusual behavior except for some departments such as Seine-Saint-Denis visible in the right bottom plot of Figure \ref{fig::residuals3}.

\begin{figure}[h]
\centering
\includegraphics[width=0.7\textwidth]{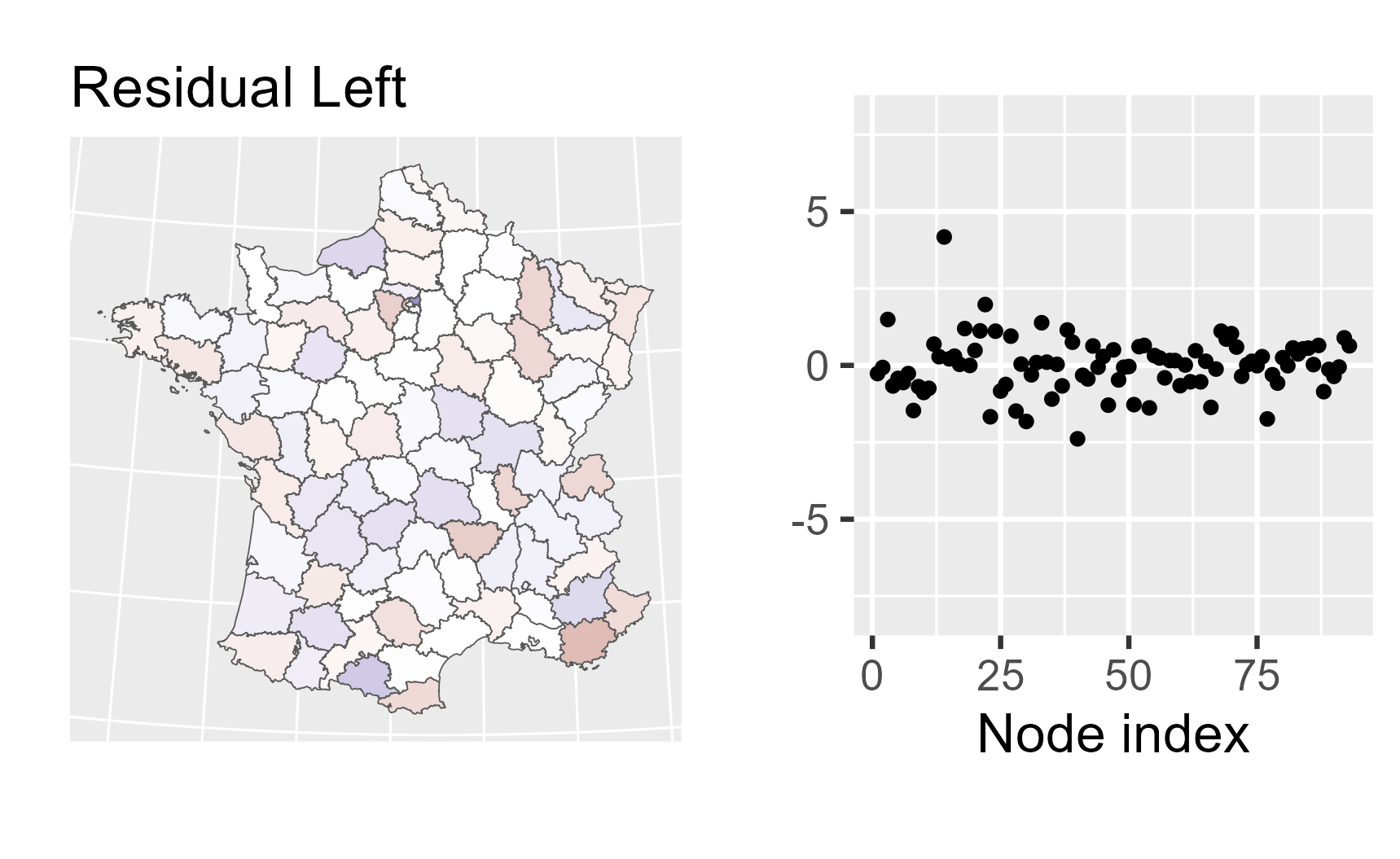}

\vspace{-20pt}  

\includegraphics[width=0.7\textwidth]{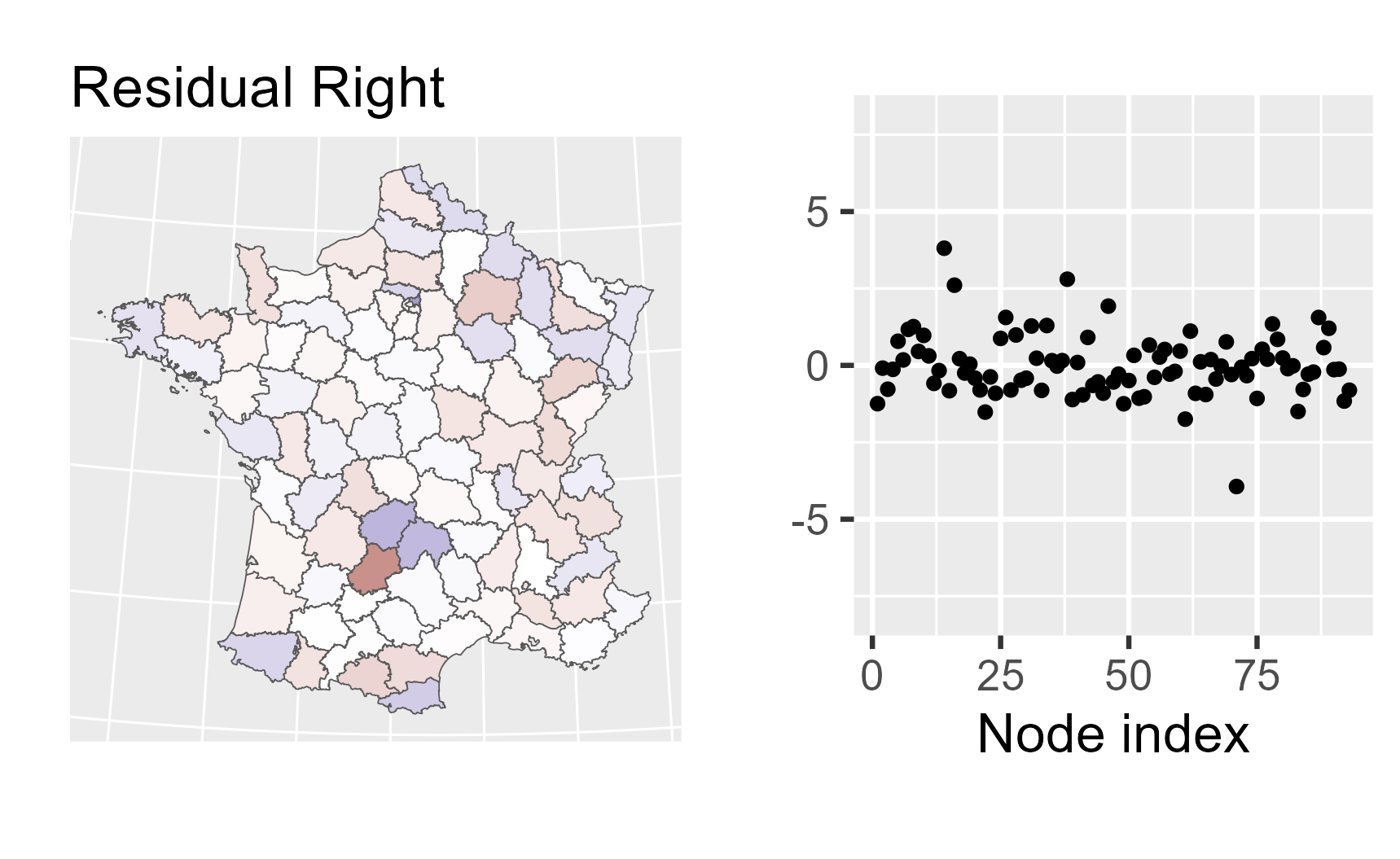}

\vspace{-20pt}  

\includegraphics[width=0.7\textwidth]{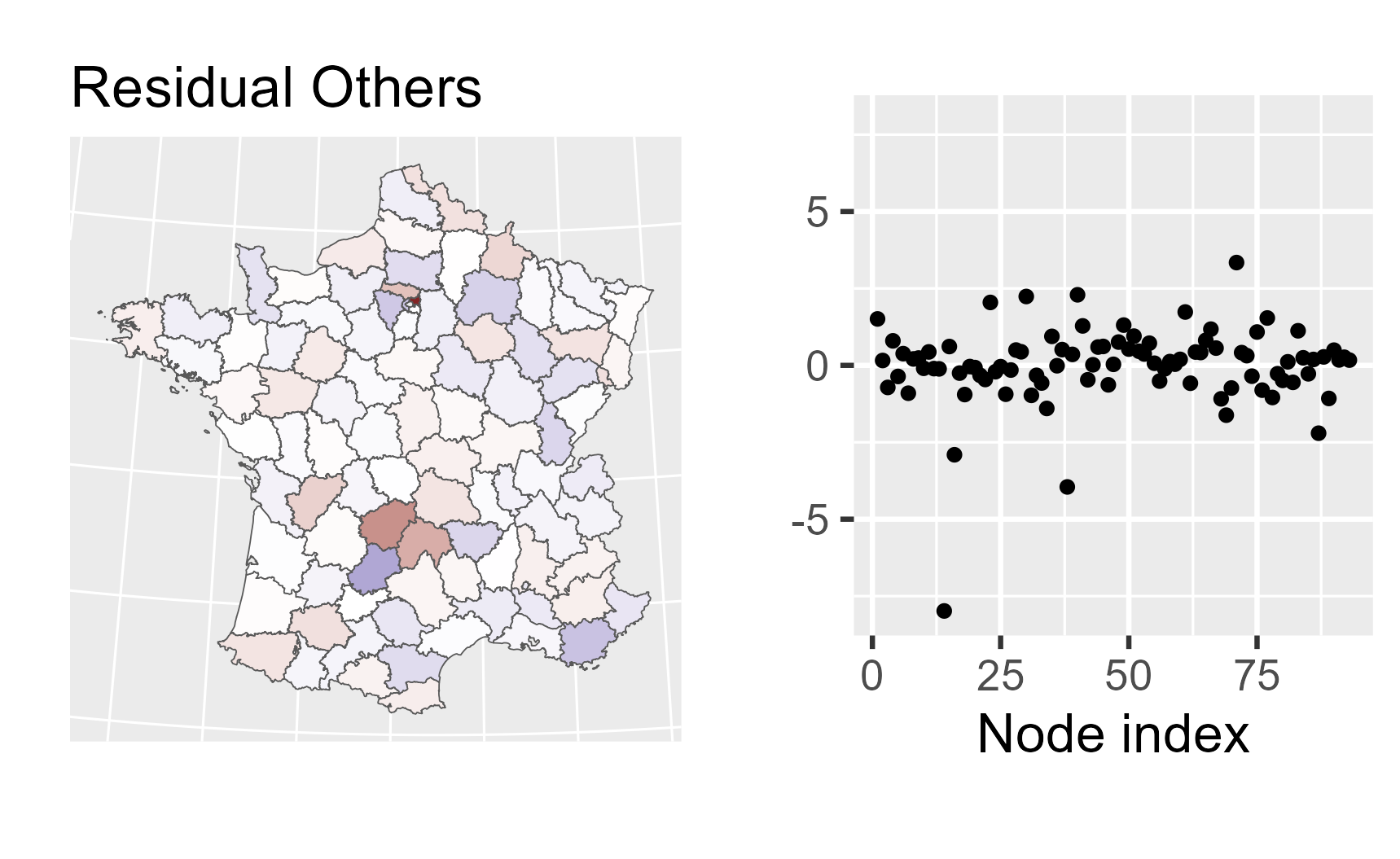}
\caption{ On the left, the standardized residuals $\matr{L}^{\frac{1}{2}} (\matr{X}^{\clr} - \matr{Z}^{\clr}\hat{\pmb{\theta}}^{\clr} ) (\hat{\matr{\Sigma}}_{V}^{\clr})^{-\frac{1}{2}}$ for each category of voters (left, right, others) for each department. Blue is positive and red is negative. The right plots show the same but are dependent on the node index. The residuals have been scaled by the maximum to make the plots comparable.}
\label{fig::residuals3}
\end{figure}

\section{Conclusion}
The current literature contains few proposals for modeling graph-indexed data in the univariate setting. In this paper, we consider Gaussian models for multivariate graph-indexed data taking into account the network dependence as well as the dependence of the variables. The Mahalanobis distance is frequently used for outlier detection but, up to our knowledge, there has not yet been a development for dependent, particularly graph-indexed, data. We introduce a new concept that we dubbed edgewise outliers. That is, given a graph structure, with multivariate data indexed by the nodes, we find edges such that incident data points are very dissimilar. We formulate decision rules for the detection of such edgewise outliers in the framework of the proposed model. We introduce a robust estimation method for their parameters inspired by the deterministic MCD algorithm. Our simulations show that the edgewise MCD algorithm outperforms the classical MCD and the standard  maximum likelihood for different performance measures including estimators quality as well as outlier detection quality. Finally, we also show the utility of our method on the French departmental election data of 2015 finding neighbouring departments that behave unalike. A setting not covered in this paper but left for future research is the high dimensional setting when the number of nodes is much smaller than the number of variables.

\section*{Statements and Declarations}

The authors declare that they have no conflict of interest.

\section*{Acknowledgements}

This work was supported by the Austrian Science Fund (FWF) under grant P32819 Einzelprojekte and grant P31881-N32.

\section*{Code}

All computations of this paper were done in the R-programming language with the core written in C++ for performance reasons. The code is available at the GitHub repository \url{https://github.com/Kristats/SpOut.git}.

%
%
\FloatBarrier

\begin{appendices}

\section{Proofs} \label{appB}

\begin{proof}[Proof of Theorem \ref{thm:mattrans}]
As the vectorization of $\matr{X}$ is a Gaussian random variable so is any linear combination of the latter, especially $\matr{A} \matr{X} \matr{B}$. Because taking the expectation is a linear operation we have $\E(\matr{A} \matr{X} \matr{B}) = \matr{A} \E(\matr{X}) \matr{B} = \matr{A}  \pmb{\mu} \matr{B}$. By using twice property (\ref{eq:covProd}) the covariance of two entries of the latter is equal to 
\begin{align*}
    \Cov((\matr{A} \matr{X} \matr{B})_{ik},(\matr{A} \matr{X} \matr{B})_{jl}) &= \sum_{s,r=1}^n \sum_{t,d=1}^p \Cov(a_{is} x_{st} b_{tk},a_{jr} x_{rd} b_{dl}) \\
    &=  \sum_{s,r=1}^n \sum_{t,d=1}^p a_{is}b_{tk}a_{jr}b_{dl}  \Cov( x_{st} , x_{rd} )  \\
    &= \sum_{s,r=1}^n \sum_{t,d=1}^p a_{is}b_{tk}a_{jr}b_{dl}   (\matr{\Sigma}_G)_{sr}(\matr{\Sigma}_V)_{td} \\
    &= \bigg(\sum_{s,r=1}^n  a_{is} (\matr{\Sigma}_G)_{sr} a_{jr}  \bigg) \bigg( \sum_{t,d=1}^p b_{tk} (\matr{\Sigma}_V)_{td} b_{dl} \bigg) \\
    &= (\matr{A} \matr{\Sigma}_G \matr{A}')_{ij}(\matr{B}' \matr{\Sigma}_V \matr{B})_{kl},
\end{align*}
which concludes the proof.
\end{proof}

\begin{proof}[Proof of Lemma \ref{lem:maha}]
    By properties of the vectorization operator and the Kronecker product (see \cite{harville1998matrix}), in particular $\vecc(\matr{AXB})=(\matr{B}'\otimes \matr{A})\vecc(\matr{X}) $, we have 
    \begin{eqnarray}
    \md2(\matr{X}) & = & 
    \norm{(\matr{\Sigma}_V \otimes \matr{L}^+)^{+/2}\vecc(\matr{X}-\pmb{\mu})}^2 \nonumber\\
     & = & \norm{(\matr{\Sigma}_V^{-1/2} \otimes \matr{L}^{1 /2})\vecc(\matr{X}-\pmb{\mu})}^2\nonumber\\
     \md2(\matr{X}) & = &  \norm{ \matr{L}^{1/2}(\matr{X}-\pmb{\mu})\matr{\Sigma}_V^{-1/2}}_F^2, \label{md_frob}
    \end{eqnarray}
    where $\norm{A}_F$ is the Frobenius norm of matrix $\matr{A}.$
    By properties of the trace $\Tr$ we can write 
    \begin{align*}
        \norm{ \matr{L}^{1/2}(\matr{X}-\pmb{\mu})\matr{\Sigma}_V^{-1/2}}^2_{F} = \Tr((\matr{X}^{\pmb{\mu}}\matr{\Sigma}_V^{-1/2})'\matr{L}(\matr{X}^{\pmb{\mu}}\matr{\Sigma}_V^{-1/2})).
    \end{align*}
Setting $\matr{Z}:= \matr{X}^{\pmb{\mu}}\matr{\Sigma}_V^{-1/2}$ we further have $
\Tr(\matr{Z}'\matr{L}\matr{Z}) = \sum_{k=1}^p \vec{z}_{:,k}' \matr{L} \vec{z}_{:,k},$ where $\vec{z}_{:,k}$ denotes the $k^{th}$ column of matrix $\matr{Z}$. By properties of the Laplacian matrix, see \cite{merris1994laplacian}, we know that each summand $\vec{z}_{:,k}' \matr{L} \vec{z}_{:,k}$ is equal to $\frac{1}{2}\sum_{i,j=1}^p (z_{ik} - z_{jk})^2 w_{ij} $. Therefore we get $\Tr(\matr{Z}'\matr{L}\matr{Z}) = \frac{1}{2}\sum_{i,j=1}^p \norm{\vec{z}_{i,:} - \vec{z}_{j,:}}^2 w_{ij},$ where $\vec{z}_{i,:}$ denotes the $i^{th}$ row of matrix $\matr{Z}$. Substituting for $\matr{Z}$ we get the result.
\end{proof}

\begin{proof}[Proof of Lemma \ref{lem:deltadistr}]
For $i=1,\dots,n$, let us define the vectors $\vec{e}_i$ that have zero components except at position $i$ where the component is $1.$ To derive the distribution of $\Delta_{ij}$ we first note that $\vec{e}_{ij}:= \vec{e}_{i} - \vec{e}_{j}\in \mathbb{R}^{n}$, that is zero except at position $i$ where it is $1$ and $-1$ at position $j$, i.e. $\vec{e}_{ij}=(0,\dots,0,1,0,\dots,0,-1,0,\dots,0)'$, satisfies 
\begin{align*}
    \vecc{(\vec{e}_{ij}'\matr{X})} \sim \mathcal{N}_{np} ( \vecc{(\vec{e}_{ij}'\pmb{\mu})}, \matr{\Sigma}_V \otimes  \vec{e}_{ij}'\matr{L}^{+} \vec{e}_{ij}).
\end{align*}
From this it is easy to deduce $ (\vec{x^{\mu}}_i - \vec{x^{\mu}}_j)\matr{\Sigma}_V^{-1/2} \sim \mathcal{N}_{np} ( \vec{0}, \matr{I}\sigma_{ij}^2) $ with $\sigma_{ij}^2=  \vec{e}_{ij}'\matr{L}^{+} \vec{e}_{ij} = l_{ii}+l_{jj}-2l_{ij}$. Thus $\displaystyle \frac{\Delta_{ij}}{w_{ij} \sigma_{ij}^2}  \sim \chi^2(p)$.
%

\end{proof}

\begin{proof}[Proof of Theorem \ref{thm::likelihoodestimators}]
    Using equation \eqref{md_frob}, the negative log-likelihood of the model can be written, where we omit the constants and the terms only depending on $\matr{L}$ assumed to be fixed, as
    \begin{align}
        \norm{ \matr{L}^{1/2}(\matr{X}-\pmb{\mu}(\vec{\theta}))\matr{\Sigma}_V^{-1/2}}_F^2 + n \log(|\matr{\Sigma}_V|) &= \Tr{((\matr{X}-\pmb{\mu}(\vec{\theta}))'\matr{L} (\matr{X}-\pmb{\mu}(\vec{\theta}))\matr{\Sigma}_V^{-1})} + n \log(|\matr{\Sigma}_V|) \nonumber\\
   &=   \md2(\matr{X})    + n \log(|\matr{\Sigma}_V|) \label{eq:ML}
    \end{align}
Taking the derivative in $\matr{\Sigma}_V^{-1}$ and using \cite{seber2009multivariate} for both terms, the estimator equation for $\matr{\Sigma}_V$ is given by:
$$(\matr{X}-\pmb{\mu}(\vec{\theta}))'\matr{L} (\matr{X}-\pmb{\mu}(\vec{\theta})) - n \matr{\Sigma}_V = 0.$$ 
Furthermore, expanding the negative log-likelihood, we can write it as 
\begin{align*}
\Tr{(\matr{X}'\matr{L}\matr{X}\matr{\Sigma}_V^{-1})} -2\Tr{(\matr{X}'\matr{L}\pmb{\mu}(\vec{\theta})\matr{\Sigma}_V^{-1})} + \Tr{(\pmb{\mu}(\vec{\theta})'\matr{L}\pmb{\mu}(\vec{\theta})\matr{\Sigma}_V^{-1})}  + n \log(|\matr{\Sigma}_V|).     
\end{align*}
For any differentiable function $g$ we can write the derivative of the composition $\vec{\theta}\mapsto g(\pmb{\mu}(\vec{\theta}))$ as
\begin{equation} \label{eq:compositionDer}
    \frac{\partial g(\pmb{\mu}(\vec{\theta}))}{\partial \theta_{ml}} = \sum_{i,k} \frac{\partial g(\pmb{\mu})}{\partial \mu_{ik}} \frac{\partial {\mu}(\vec{\theta})_{ik}}{\partial \theta_{ml}}.
\end{equation}
 By matrix calculus, see \cite{petersen2008matrix}, we have 
$\frac{\partial }{\partial \mu_{ik}} (-2 Tr{(\matr{X}'\matr{L}\pmb{\mu}\matr{\Sigma}_V^{-1})}) = -2(\matr{L}\matr{X}\matr{\Sigma}_V^{-1})_{ik}$ and $\frac{\partial }{\partial \mu_{ik}} Tr{(\pmb{\mu}'\matr{L}\pmb{\mu}\matr{\Sigma}_V^{-1})} = 2(\matr{L}\pmb{\mu}\matr{\Sigma}_V^{-1})_{ik}$. Plugging the latter two into  (\ref{eq:compositionDer}) with $g$ being the expanded negative log-likelihood we have for $m=1,\ldots,q$ and $l=1,\ldots,\tilde{p}$:
\begin{align*}
    \frac{\partial g(\pmb{\mu}(\vec{\theta}))}{\partial \theta_{ml}} = \sum_{i=1}^n \sum_{k=1}^p (-2(\matr{L}\matr{X}\matr{\Sigma}_V^{-1}) + 2 \matr{L}\pmb{\mu}\matr{\Sigma}_V^{-1})_{ik}  \frac{\partial {\mu}(\vec{\theta})_{ik}}{\partial \theta_{ml}} = 0
\end{align*}
which yields the estimating equation for $\pmb{\mu}$.
If $\pmb{\mu}({\pmb{\theta}}) = \matr{Z}\pmb{\theta} $, then
\begin{align*}
    \frac{\partial {\mu}({\pmb{\theta}})_{ik}}{\partial \theta_{ml}}  =\frac{\partial}{\partial \theta_{ml}} \left(\sum_{m'} z_{im'} \theta_{m'k}\right) = z_{im} \delta_{kl}
\end{align*}
where $\delta_{kl}$ is the Kronecker delta being one if $k = l$ and zero otherwise. All in all, we have for $m=1,\ldots,q$: 
\begin{align*}
    \sum_{i=1}^n (\matr{L}(\pmb{\mu}(\vec{\theta})-\matr{X})\matr{\Sigma}_V^{-1})_{il}\,  z_{im}  = 0
\end{align*}
which can be written as $\matr{Z}'\matr{L}(\matr{Z} \pmb{\theta}-\matr{X})\matr{\Sigma}_V^{-1} = \matr{0} $, and gives the desired result. Lastly, the result for $\matr{Z} = \vec{1}_n$ follows directly from $\vec{1}_n'\matr{L} = \vec{0}$.
\end{proof}

\end{appendices}
 
\bibliographystyle{abbrv} 
\bibliography{CellBib}

\end{document}